\documentclass[conference]{IEEEtran}
\IEEEoverridecommandlockouts 

\usepackage[utf8]{inputenc}   
\usepackage[T1]{fontenc}      
\usepackage{amsmath,amssymb,amsfonts} 
\usepackage{bm}               
\usepackage{dsfont}           
\usepackage{nicefrac}         
\usepackage{microtype,soul}   
\usepackage{xcolor}           
\usepackage{textcomp}         

\usepackage{graphicx}         
\graphicspath{{./figures/}}  
\usepackage{booktabs}         
\usepackage{multirow}         
\usepackage{tabularx}         
\usepackage[caption=false,font=normalsize,labelfont=sf,textfont=sf]{subfig}

\usepackage{algorithm}        
\usepackage{algorithmic}      

\usepackage{cite}             
\usepackage{url}              
\usepackage[colorlinks=true,citecolor=blue,breaklinks=true]{hyperref}

\newtheorem{theorem}{Theorem}[section]

\newtheorem{definition}[theorem]{Definition}
\newtheorem{lemma}[theorem]{Lemma}



\newcommand{\amax}{A_{\max}}
\renewcommand{\texttt}[1]{{\fontfamily{lmtt}\selectfont#1}}

\usepackage{refcount}         
\usepackage{tikz}             
\def\BibTeX{{\rm B\kern-.05em{\sc i\kern-.025em b}\kern-.08em
    T\kern-.1667em\lower.7ex\hbox{E}\kern-.125emX}}
\begin{document}

\title{Decentralized MARL for Coarse Correlated Equilibrium in Aggregative Markov Games\\
\thanks{This research is supported by the National Key Research and Development Program of China (2022YFA1004600), the Natural Science Foundation of China (T2293770, 12288201).}
}

\author{\IEEEauthorblockN{1\textsuperscript{st} Siying~Huang}
\IEEEauthorblockA{\textit{The School of Mathematical Sciences} \\
\textit{University of Chinese Academy of Sciences}\\
Beijing, China \\
huangsiying@amss.ac.cn}
\and
\IEEEauthorblockN{2\textsuperscript{nd} Yifen~Mu}
\IEEEauthorblockA{\textit{SKLMS} \\
\textit{AMSS, CAS}\\
Beijing, China \\
mu@amss.ac.cn}
\and
\IEEEauthorblockN{3\textsuperscript{rd} Ge~Chen}
\IEEEauthorblockA{\textit{SKLMS} \\
\textit{AMSS, CAS}\\
Beijing, China \\
chenge@amss.ac.cn}
}

\maketitle

\begin{abstract}
This paper studies the problem of decentralized learning of Coarse Correlated Equilibrium (CCE) in aggregative Markov games (AMGs), where each agent’s instantaneous reward depends only on its own action and an aggregate quantity. 
Existing CCE learning algorithms for general Markov games are not designed to leverage the aggregative structure, and research on decentralized CCE learning for AMGs remains limited. We propose an \textit{adaptive stage-based} V-learning algorithm that exploits the aggregative structure under a fully decentralized information setting. Based on the two-timescale idea, the algorithm partitions learning into stages and adjusts stage lengths based on the variability of aggregate signals, while using no-regret updates within each stage. We prove the algorithm achieves an $\epsilon$-approximate CCE in $\widetilde{O}(S A_{\max}T^5 /\epsilon^2)$ episodes, avoiding \textit{the curse of multiagents} which commonly arises in MARL. Numerical results verify the theoretical findings, and the decentralized, model-free design enables easy extension to large-scale multi-agent scenarios.
\end{abstract}

\begin{IEEEkeywords}
aggregative Markov games, coarse correlated equilibrium, decentralized learning, MARL, sample complexity
\end{IEEEkeywords}

\section{Introduction}
Many sequential decision-making problems in the real world involve strategic interactions among multiple agents in a shared environment. Multi-agent reinforcement learning (MARL) provides an effective framework for solving such problems, and has achieved success in many fields, including the game of Go \cite{silver2016mastering}, Poker \cite{brown2018superhuman}, autonomous driving \cite{kiran2021deep}, and Large Language Models \cite{zhu2025lamarl}. However, in practical systems, agents often make decisions based only on local observations, and centralized coordination is often impractical due to high communication costs or unreliable communication \cite{kar2013q,hodge2021deep}. Therefore, designing decentralized, communication-free, and scalable multi-agent learning algorithms is of great practical significance.

Markov games \cite{shapley1953stochastic} are a common mathematical framework for describing MARL. Under this framework, the core goal of multi-agent learning is usually to find a certain game equilibrium solution, such as Nash Equilibrium (NE) \cite{nash1950equilibrium} and its variants. However, in general-sum scenarios, computing NE is PPAD-hard \cite{daskalakis2009complexity,chen2009settling}, which makes learning NE face great challenges both in theory and practice.

A natural alternative is the coarse correlated equilibrium (CCE) \cite{aumann1987correlated}. Unlike NE, CCE in general-sum games can be computed in polynomial time \cite{papadimitriou2008computing}. More importantly, in normal-form games, CCE can be approximated when agents independently run no-regret learning algorithms \cite{hart2000simple}. This makes CCE well suited for decentralized learning. Accordingly, recent work has focused on learning CCE in Markov games and established several sample complexity guarantees \cite{liu2020sharp,mao2022provably,cai2024near}.

Despite these advances, most existing work treats the environment as fully general and does not exploit underlying structure \cite{jin2021v,mao2022improving}. In fact, in practical scenarios such as market competition \cite{iyer2014information} and public resource allocation \cite{mills1967aggregative}, an agent’s reward usually depends only on its own behavior and an aggregate quantity. Such problems with an aggregative reward structure can be modeled as aggregative Markov games (AMGs). This aggregative structure widely appears in demand response in power systems \cite{ye2016game}, resource allocation in communication networks \cite{liao2014efficient}, and pricing mechanisms based on aggregate demand in economic systems \cite{nocke2018multiproduct}, making AMGs a class of game models with important application relevance. 
However, existing work on AMGs is extremely limited. The few related studies mainly focus on network scenarios and continuous state spaces, and fail to consider decentralized CCE learning in AMGs. This leads to a natural question:
\begin{center}
\textbf{Can we design decentralized algorithms that exploit the aggregative structure to efficiently learn CCE?}
\end{center}

To address this question, this paper studies decentralized CCE learning in aggregative Markov games, and the main results are as follows:
\begin{itemize}
    \item We propose an adaptive stage-based V-learning algorithm. The algorithm follows a two-timescale idea. It partitions each episode into stages to create a stable learning environment, and adaptively adjusts stage lengths based on the variability of the aggregate signal. Within each stage, Tsallis-INF no-regret learning is used to quickly update the policy to achieve per-state no-regret. The algorithm is fully decentralized, and each agent makes decisions using only local information without communication.
    \item We show that if all agents independently run the proposed algorithm, an $\epsilon$-approximate CCE can be found in at most $\widetilde{O}(S A_{\max}T^5 /\epsilon^2)$ episodes, where $S$ is the number of states, $A_{\max}$ is the maximum action space size, and $T$ is the episode length. This upper bound avoids  dependence on the number of agents and matches existing sample complexity bounds of general Markov games \cite{jin2021v,mao2022improving}.
    \item We provide numerical results to support the theoretical findings. Since the algorithm is decentralized and model-free, it can be easily scaled to large multi-agent systems. 
\end{itemize}

\textbf{Paper Organization:} Section~\ref{sec:preliminaries} introduces preliminary definitions and notations; Section~\ref{sec:cce_ce} details the proposed decentralized adaptive stage-based learning algorithm and core theoretical results; Section~\ref{sec:simulations} presents numerical simulations and analysis; Section~\ref{sec:conclusions} concludes the paper.

\section{Preliminaries}\label{sec:preliminaries}

\noindent\textbf{Markov game.} An $N$-player episodic Markov game is defined by a tuple: 
\begin{equation}\label{MG1}
\{\mathcal{T},\mathcal{N},\!\{\mathcal{S}^t\}_{t=1}^T,\!\{\mathcal A_i^t(\cdot)\}_{i\in\mathcal N,t\in\mathcal T},\!\{r_i^t\}_{i\in\mathcal N,t\in\mathcal T},\!\{p^t\}_{t\in\mathcal T}\},
\end{equation}
where  i) $\mathcal{T} =\{1,2,\dots,T\}$ is the set of finite time steps in each episode; ii) $\mathcal{N} = \{1,2,\dots,N\}$ is the set of agents; iii) $\mathcal{S}^t$ is the finite state space at time $t$, where \( s^t \in \mathcal{S}^t \) denotes the state of the system at time \( t \), and $\{\mathcal{S}^t\}_{t=1}^T$ denote the collection of stage-wise state spaces; iv) $\mathcal A_i^t(s^t)$ denotes the finite action space available to agent $i$ at time $t$ when the system is in state $s^t\in\mathcal S^t$, where \( a_i^t \in \mathcal A_i^t(s^t) \) is the action taken by agent \( i \).
      The joint action space at time $t$ is given by $
      \mathcal A^t(s^t):=\prod_{i\in\mathcal N}\mathcal A_i^t(s^t)$, and \( \bm{a}^t = (a_1^t,\dots, a_N^t) \in \mathcal A^t(s^t) \) denotes the action profile at time \( t \);
    v) $r_i^t(s^t,\bm{a}^t) \in [0,1]$ is the stage payoff  (reward) of agent $i$ at time $t$; and vi) $p^t(\cdot\mid s^t,\bm{a}^t)\in\Delta(\mathcal S^{t+1})$ is the state transition kernel at time $t$, where $\Delta(\mathcal S^{t+1})$ denotes the probability simplex over $\mathcal S^{t+1}$.

Notably, this formulation \eqref{MG1} allows time-varying state spaces $\mathcal{S}^t$ and stage-dependent action spaces $\mathcal A_i^t(s^t)$, which captures finite-horizon problems with evolving constraints.

For notational convenience, we define
$S := \max_{t\in\mathcal T} |\mathcal S^t|$, $A_i := \max_{t\in\mathcal T}\max_{s^t\in\mathcal S^t} |\mathcal A_i^t(s^t)|, \forall i\in\mathcal N$,
and
$A_{\max} := \max_{i\in\mathcal N} A_i$.

\vspace{.8em}
\noindent\textbf{Aggregative Markov game.}
We consider a specialized class of Markov games with aggregative stage-wise reward structure, named \textit{aggregative Markov games} (AMGs). Formally, the stage payoff (reward) of each agent $i$ depends only on its local action and an aggregate of other agents' actions:
\[
r_i^t(s^t,\bm a^t) = r_i^t\bigl(s^t, a_i^t, \sigma(\bm a_{-i}^t)\bigr), \forall i \in \mathcal{N}, t\in \mathcal{T}, 
\]
where $\bm{a}_{-i}^t = (a_j^t)_{j\neq i}$ denotes the action profile of all agents except \(i\), and function $\sigma(\cdot)$ is an aggregator (e.g., sum, average).

Such structures arise in many large-scale systems where interactions are mediated through aggregate quantities, including congestion control, resource allocation in communication networks, and economic markets, where payoffs depend on summary statistics such as total demand or average load. This makes AMGs practically important and widely applicable, yet research on their decentralized learning remains scarce.

\vspace{.8em}
\noindent\textbf{Policy and value function.} A (Markov) policy $\bm{\pi}_i(\cdot):=(\pi_i^1(\cdot),\dots,\pi_i^T(\cdot))$ for agent $i$ is a sequence of step-$t$ decision rules such that $\pi_i^t(s^t)\in\Delta (\mathcal{A}_i^t(s^t))$ for all $t \in \mathcal{T}$ and $s^t \in \mathcal{S}^t$, where $\pi_i^t(\cdot)$ maps the step-$t$ state $s^t$ to a probability distribution over agent $i$'s feasible action space $\mathcal{A}_i^t(s^t)$ at state $s^t$.  Let $\Pi_i$ denote the set of all Markov policies for agent $i$, and $\Pi = \times_{i=1}^N \Pi_i$ denote the joint Markov policy space. Each agent aims to find a policy that maximizes its cumulative stage payoff over the $T$ stages. A joint policy (or policy profile) $\bm\pi = (\bm\pi_1,\dots,\bm\pi_N) \in \Pi$ induces a probability measure over the sequence of states and joint actions. 

For a policy profile $\bm\pi$, and for any $t\in\mathcal{T}$, $s \in \mathcal{S}^t$, and $\bm{a} \in \mathcal{A}^t(s^t)$, the value function and the state-action value function (or $Q$-function) for agent $i$ are defined as: 
\begin{align}
V_{t,i}^{\bm\pi} (s) &:= \mathbb{E}_{\bm\pi} \bigg[ \sum_{t'=t}^{T}r^{t'}_i(s^{t'},\bm{a}^{t'})\mid s^t = s \bigg],\label{eqn:value} \\
Q_{t,i}^{\bm\pi} (s,\bm{a}) &:= \mathbb{E}_{\bm\pi} \bigg[ \sum_{t'=t}^{T}r^{t'}_i (s^{t'},\bm{a}^{t'})\mid s^t = s, \bm{a}^t = \bm{a} \bigg]. \nonumber
\end{align}

\noindent\textbf{Best response and Nash equilibrium.}
 For agent $i$, a policy $\bm\pi_i^{\star}$ is a \emph{best response} to $\bm\pi_{-i}$ for a given initial state $s_1$ if $V_{1,i}^{\bm\pi_i^{\star}, \bm\pi_{-i}}(s_1) = \sup_{\bm\pi_i} V_{1,i}^{\bm\pi_{i }, \bm\pi_{-i}}(s_1)$. A policy profile $\bm\pi = (\bm\pi_{i}, \bm\pi_{-i })\in\Pi $ is a \emph{Nash equilibrium} (NE) if $\bm\pi_{i }$ is a best response to $\bm\pi_{-i}$ for all $i\in\mathcal{N}$.

\vspace{.8em}
\noindent\textbf{Correlated policy.}
More generally, we define a (non-Markov) \emph{correlated policy} as \( \bm\pi = (\pi^1(\cdot),\ldots,\pi^T(\cdot)) \), where for each time step \( t\in\mathcal{T} \), the step-\( t \) decision rule \( \pi^t \) satisfies: 
\( \pi^t(z, s^1,\bm{a}^{1},\dots,s^{t-1},\bm{a}^{t-1},s^{t}) \in \Delta\left(\mathcal{A}^t(s^t)\right) \) for all \( s^{t'} \in\mathcal{S}^{t'} \), \( \bm{a}^{t'}\in \mathcal{A}^{t'}(s^{t'}) \) (for \( t'\in\{1,\dots,t-1\} \)), \( s^{t} \in\mathcal{S}^{t} \), and \( z\in\mathcal{R} \). The decision rule \( \pi^t \) maps a random variable \( z\in\mathcal{R} \) and a history of length $t-1$ represented by \( (s^1,\bm{a}^1,\dots,s^{t-1},\bm{a}^{t-1}) \) to a distribution over the joint action space. We assume that the agents following a correlated policy can access a common source of randomness (e.g., a common random seed) for the random variable $z$. We let \( \bm\pi_i = (\pi_i^1(\cdot),\ldots,\pi_i^T(\cdot)) \) and \( \bm\pi_{-i} = (\pi_{-i}^1(\cdot),\ldots,\pi_{-i}^T(\cdot)) \) be the proper marginal policies of \( \bm\pi \), whose step-\( t \) outputs are restricted to \( \Delta(\mathcal{A}_i^t(s^t)) \) and \( \Delta(\mathcal{A}_{-i}^t(s^t)) \), respectively. 

For non-Markov correlated policies, the value function at $t=1$ is defined analogously to~\eqref{eqn:value}. A best response $\bm\pi_i^{\star}$ with respect to the non-Markov policies $\bm\pi_{-i}$ is a policy (independent of the randomness of  $\bm\pi_{-i}$) that maximizes agent $i$'s value at step 1, i.e., $V_{1,i}^{\bm\pi_i^{ \star}, \bm\pi_{-i}}(s_1) = \sup_{\bm\pi_i} V_{1,i}^{\bm\pi_{i }, \bm\pi_{-i}}(s_1)$, and is not necessarily Markov. 

\vspace{.8em}
\noindent\textbf{Coarse correlated equilibrium.}
Given the PPAD-hardness of calculating NE in general-sum games~\cite{daskalakis2009complexity}, a widely used relaxation is \textit{coarse correlated equilibrium} (CCE).  A CCE ensures no agent has the incentive to deviate from a correlated policy $\bm\pi$ by playing a different independent policy. 

\begin{definition}\label{def:CCE}
	(CCE). A correlated policy $\bm\pi$ is an $\epsilon$-approximate coarse correlated equilibrium for any initial state $s^1\in \mathcal{S}^1$ if
	$
	V_{1,i}^{\bm\pi_i^{ \star}, \bm\pi_{-i}}(s^1) - V_{1,i}^{\bm\pi}(s^1) \leq \epsilon,\forall i\in\mathcal{N}.
	$
\end{definition}

CCE relaxes NE by allowing possible correlations in the policies, and NE is a special CCE in general-sum games~\cite{nisan2007algorithmic}. 

\begin{algorithm*}[!tbp]
\caption{Adaptive Stage-Based V-Learning with Tsallis-INF for CCE (Agent $i$)}\label{alg:sbv}
\begin{algorithmic}[1] 
    \STATE \textbf{Initialize:} 
    $\overline{V}_{t,i}(s) \gets T-t+1$, $\hat{V}_{t,i}(s) \gets T-t+1$, $\mathcal{D}^t(s) \gets \emptyset$, $\tilde{C}^t(s)\gets 0$, 
    $\tilde{r}_{i}^{t}(s)\gets 0$, $\tilde{v}_{i}^{t}(s)\gets 0$, \\ $\tilde{L}^t(s)\gets T$, $\pi_{i}^{t}(a\mid s)\gets 1/|\mathcal A_i^t(s)|$, 
    $\hat Q_{t,i}(s,a)\gets 0$, $\forall t\in\mathcal{T},s\in\mathcal{S}^t,a\in\mathcal A_i^t(s)$.
    
    \FOR{episode $k\gets 1$ to $K$}
        \STATE Receive initial state $s^1$;
        \FOR{step $t\gets 1$ to $T$}
            \STATE $\tilde{c}:= \tilde{C}^t(s^t) \gets \tilde{C}^t(s^t) + 1$;
            \STATE Take action $a_{i}^{t} \sim \pi_{i}^{t}(\cdot \mid s^t)$, observe reward $r_{i}^{t}$ and next state $s^{t+1}$, and compute aggregate quantity $d^t$\label{line:6};
            \STATE $\mathcal{D}^t(s^t) \gets \mathcal{D}^t(s^t) \cup \{d^t\}$\label{line:7};
            \STATE $\tilde{r}_{i}^{t}(s^t)\gets \tilde{r}_{i}^{t}(s^t) + {r}_{i}^{t}$, $\tilde{v}_{i}^{t}(s^t)\gets \tilde{v}_{i}^{t}(s^t) + \overline{V}_{t+1,i}(s^{t+1})$;
            \STATE $\eta_i \gets 2\sqrt{1/\tilde{c}}$;
            \STATE $\hat Q_{t,i}(s^t,a_{i}^{t}) \gets \hat  Q_{t,i}(s^t,a_{i}^{t})+ \frac{[T-t+1-({r}_{i}^{t} + \overline{V}_{t+1,i}(s^{t+1}))]/T}{\pi_{i}^{t}(a_{i}^{t} \mid s^t)}$\label{line:10};
            \STATE $\pi_{i}^{t}(a\mid s^t)\gets 4(\eta_i (\hat Q_{t,i}(s^t,a)-x))^{-2} ,\forall a\in\mathcal A_i^t(s^t),$ where normalization factor $x$ satisfies\\ $\sum_{a\in\mathcal A_i^t(s^t)} 4(\eta_i (\hat Q_{t,i}(s^t,a)-x))^{-2} =1$\label{line:11};
            
            \IF{$\tilde{C}^t(s^t) = \tilde{L}^t(s^t)$ \label{line:12}}
                \STATE \texttt{ // Entering a new stage}
                \STATE $\hat{V}_{t,i}(s^t)\gets  \frac{\tilde{r}_{i}^{t}(s^t)}{\tilde{c}} + \frac{\tilde{v}_{i}^{t}(s^t)}{\tilde{c}}+b_{\tilde{c}}$, where $b_{\tilde{c}}\gets 4\sqrt{T^2 A_i\iota / \tilde{c}}$\label{line:14};
                \STATE $\overline{V}_{t,i}(s^t)\gets \min\{\hat{V}_{t,i}(s^t),T-t+1\}$;
                \STATE $\lambda(s^t) \gets f(\mathcal{D}^t(s^t))$ \texttt{//Adaptive fluctuation coefficient (Alg. \ref{alg:fluctCV}/\ref{alg:fluctMAD})}\label{line:16}; 
                \STATE $\tilde{L}^t(s^t)\gets \lfloor \lambda(s^t) (1+\frac{1}{T})\tilde{L}^t(s^t) \rfloor$\label{line:17};
                \STATE $\mathcal{D}^t(s^t) \gets \emptyset$, $\tilde{C}^t(s^t)\gets 0$, $\tilde{r}_{i}^{t}(s^t)\gets 0$, $\tilde{v}_{i}^{t}(s^t)\gets 0$;
                \STATE $\pi_{i}^{t}(a\mid s^t)\gets 1/|\mathcal A_i^t(s^t)|$; $\hat Q_{t,i}(s^t,a)\gets 0$, $\forall a\in\mathcal A_i^t(s^t)$\label{line:19};
            \ENDIF
        \ENDFOR
    \ENDFOR
\end{algorithmic}
\end{algorithm*}

\vspace{.8em}
\noindent\textbf{Decentralized learning.} Agents interact in an unknown environment for $K$ episodes, with the initial state $s^1$ drawn from a fixed distribution $\rho\in\Delta(\mathcal{S}^1)$. At each step $t\in\mathcal{T}$, the agents observe the state $s^t \in \mathcal{S}^t$, and take actions $a^t_i \in\mathcal A_i^t(s^t), i\in\mathcal{N}$ simultaneously.  Agent $i$ then receives its private stage payoff $r_{i}^t(s^t,\bm{a}^t)$, and the environment transitions to the next state $s^{t+1}\sim p^t(\cdot | s^t,\bm{a}^t)$. Notably, non-deterministic state transitions raise the difficulty of decentralized coordination, as agents cannot rely on fixed state trajectories for implicit alignment.

We focus on a fully \emph{decentralized} setting: each agent only observes the states and its own rewards and actions, but not the rewards or actions of the other agents. In fact, in our algorithm, each agent is completely oblivious to the existence of the others, and does not communicate with each other. This decentralized information structure requires decision-making based solely on local information, and naturally arises in practical multi-agent systems where communication is limited or costly.

\section{Decentralized CCE Learning in AMGs}\label{sec:cce_ce}

This section presents our decentralized learning algorithm for CCE in general-sum AMGs and establishes its sample complexity guarantees.

\subsection{Algorithmic Design Ideas}
In decentralized Markov games, from the perspective of any individual agent, the environment reduces to a single-agent Markov decision process (MDP) within a single episode. Across multiple episodes, however, the environment perceived by each agent becomes non-stationary, making it highly challenging for agents to simultaneously estimate value functions and optimize policies.

A natural way to address this non-stationarity is a \textbf{two-timescale} design, a validated approach in decentralized multi-agent learning~\cite{sayin2021decentralized,sayin2022fictitious}: 
value estimation is updated on a slower timescale to maintain a relatively stable learning target, while policy updates run on a faster timescale within stable phases.
To implement this idea, we partition the learning process into \emph{stages} with slowly evolving value functions, inspired by \emph{stage-based} V-learning~\cite{mao2022improving}. Within each stage, we optimize policies via \emph{Tsallis-INF} to achieve \emph{per-state no-regret} learning.
This combination of staged value stabilization and no-regret policy updates forms the core of our decentralized CCE learning framework.

\subsection{Algorithm Description}
Compared to general Markov games, the \textbf{aggregative payoff} structure reduces effective interaction complexity and provides additional information through observable aggregate quantities that summarize collective behavior.

Our algorithm leverages this structure via an \textbf{adaptive stage-based} mechanism, as illustrated in Figure~\ref{fig:adaptive_stage}. Agents first compute aggregate signals within each stage, then estimate a fluctuation coefficient to assess environmental stability, and finally adapt stage lengths based on this stability signal. This design allows the algorithm to adjust to varying levels of non-stationarity, distinguishing it from existing fixed-stage methods.

\begin{figure}[tbp]
    \centering
    \includegraphics[width=0.49\textwidth]{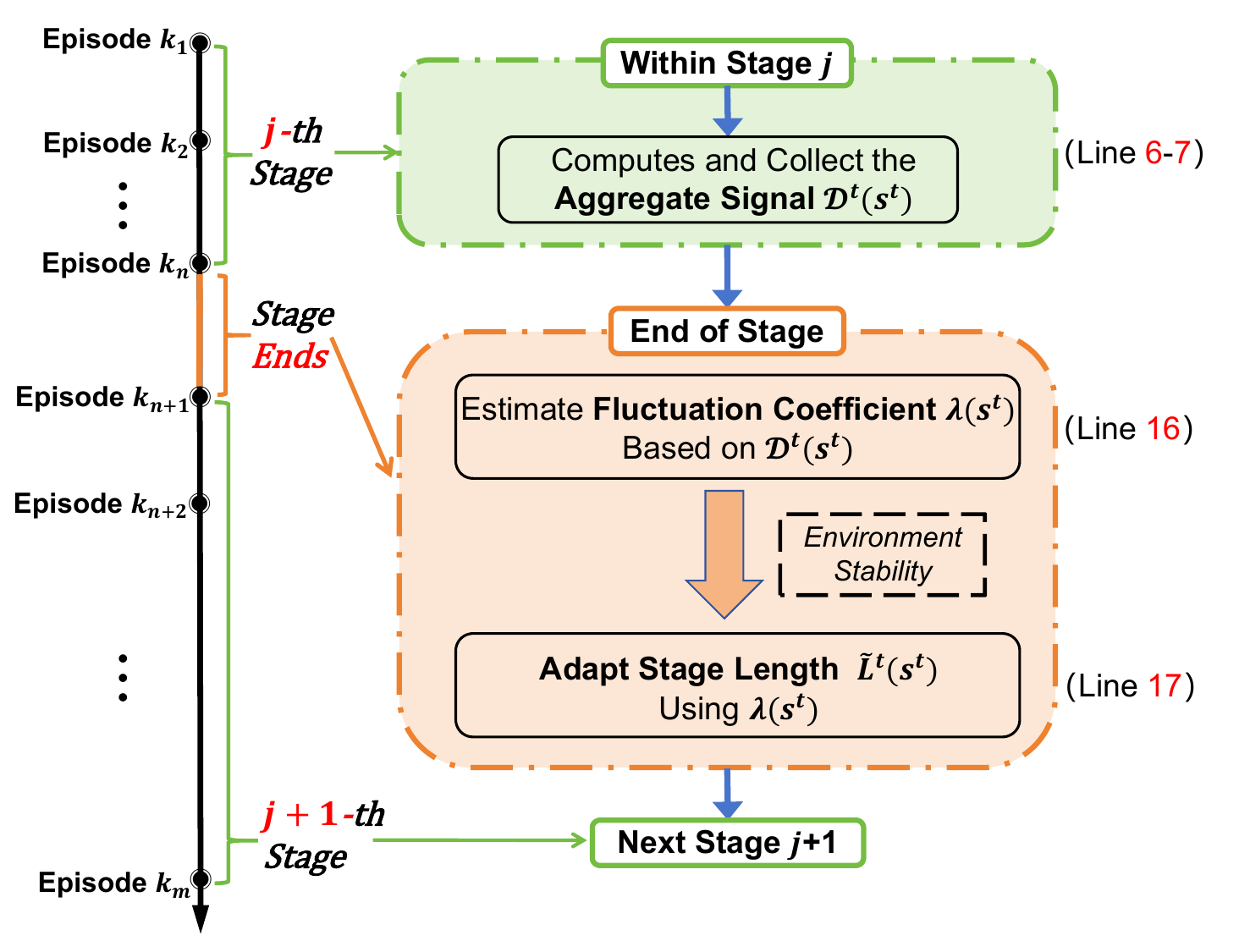}
    \caption{Illustration of the adaptive stage-based mechanism.}
    \label{fig:adaptive_stage}
\end{figure}

The adaptive stage-based V-learning algorithm for CCE is presented in Algorithm~\ref{alg:sbv}, which is executed independently by each agent $i\in\mathcal{N}$. The agent maintains upper confidence bounds on the value functions to actively explore the unknown environment, and updates value estimates independently via the aforementioned adaptive stage-based rule.

For each step-state pair $(t,s^t)$, the agent partitions its visitations into a sequence of \emph{stages}. Each stage has a length $\tilde{L}^t(s^t)$, initialized as $\tilde{L}^t(s^t)=T$ and updated at the end of each stage. The stage length is adjusted via a \emph{fluctuation coefficient} $\lambda(s^t)$ computed from the sequence of observed aggregate quantities within the current stage, denoted by $\mathcal{D}^t(s^t)=\{d^{t,1},\dots,d^{t,n}\}$. Specifically, the next stage length is updated as
\[
\tilde{L}^t(s^t)\gets \lfloor \lambda(s^t)(1+\tfrac{1}{T})\tilde{L}^t(s^t)\rfloor,
\]
so that stage lengths adapt to the observed aggregate volatility of the environment while growing at a near-geometric rate $(1+1/T)$ \cite{zhang2020almost}. In practice, $\lambda(s^t)$ can be computed using various stability metrics. We adopt two common approaches: Coefficient of Variation (CV) \cite{hendricks1936sampling} (see  Algorithm~\ref{alg:fluctCV}) or Mean Absolute Deviation (MAD) \cite{geary1935ratio} (see Algorithm~\ref{alg:fluctMAD}).

\begin{algorithm}[!tbp]
\caption{Adaptive Fluctuation Coefficient $f_{\text{CV}}(\mathcal{D}^t(s^t))$ via CV}\label{alg:fluctCV}
\begin{algorithmic}[1]
    \STATE \textbf{Input:} Positive aggregate quantity sequence $\mathcal D^t(s^t)=\{d^{t,1},\dots,d^{t,n}\}$, minimum fluctuation coefficient $\lambda_{\text{min}} \in (\frac{T}{T+1},1]$, empirical upper bound $CV_{\text{max}}$
    \STATE \textbf{Output:} Adaptive fluctuation coefficient $\lambda(s^t)\in[\lambda_{\min},1]$

    \IF{$|\mathcal{D}^t(s^t)| < 2$}
        \STATE $\lambda(s^t) \gets 1$; \texttt{// Default: no fluctuation (max stage length)}\;
        \RETURN $\lambda(s^t)$
    \ENDIF

    \STATE $\bar{d} \gets \frac{1}{|\mathcal{D}^t(s^t)|}\sum_{d^{t,j} \in \mathcal{D}^t(s^t)} d^{t,j}$;

    \STATE $\sigma \gets \sqrt{\frac{1}{|\mathcal{D}^t(s^t)|-1}\sum_{d^{t,j} \in \mathcal{D}^t(s^t)} (d^{t,j} - \bar{d})^2}$; 
    \STATE $CV \gets \frac{\sigma}{\bar{d}}$;

    \STATE $\gamma \gets \min\left(\frac{CV}{CV_{\text{max}}}, 1\right)$; 
    \STATE $\lambda(s^t) \gets \lambda_{\text{min}} + (1 - \lambda_{\text{min}}) \cdot \gamma$;
    \RETURN $\lambda(s^t)$  
\end{algorithmic}
\end{algorithm}

\begin{algorithm}[!tbp]
\caption{Adaptive Fluctuation Coefficient $f_{\text{MAD}}(\mathcal{D}^t(s^t))$ via MAD}\label{alg:fluctMAD}
\begin{algorithmic}[1]
    \STATE \textbf{Input:} Positive aggregate quantity sequence $\mathcal D^t(s^t)=\{d^{t,1},\dots,d^{t,n}\}$, minimum fluctuation coefficient $\lambda_{\text{min}} \in (\frac{T}{T+1},1]$, empirical upper bound $MAD_{\text{max}}$
    \STATE \textbf{Output:} Adaptive fluctuation coefficient $\lambda(s^t) \in [\lambda_{\text{min}}, 1]$

    \IF{$|\mathcal{D}^t(s^t)| < 2$}
        \STATE $\lambda(s^t) \gets 1$; \texttt{// Default: no fluctuation}\;
        \RETURN $\lambda(s^t)$
    \ENDIF

    \STATE $\bar{d} \gets \frac{1}{|\mathcal{D}^t(s^t)|}\sum_{d^{t,j} \in \mathcal{D}^t(s^t)} d^{t,j}$;

    \STATE $MAD \gets \frac{1}{|\mathcal{D}^t(s^t)|}\sum_{d^{t,j} \in \mathcal{D}^t(s^t)} |d^{t,j} - \bar{d}|$;

    \STATE $\gamma \gets \min\left(\frac{MAD}{MAD_{\text{max}}}, 1\right)$; 
    \STATE $\lambda(s^t) \gets \lambda_{\text{min}} + (1 - \lambda_{\text{min}}) \cdot \gamma$;
    \RETURN $\lambda(s^t)$
\end{algorithmic}
\end{algorithm}

When the visitation count $\tilde{C}^t(s^t)$ reaches $\tilde{L}^t(s^t)$, the current stage ends and the agent updates its optimistic value estimate $\overline{V}_{t,i}(s^t)$ using only samples collected within this stage (Lines~\ref{line:12}--\ref{line:19}). All stage-specific statistics are then reset, and the policy is reinitialized to a uniform distribution for the next stage. This staged update helps maintain a relatively stable learning environment, mitigate multi-agent non-stationarity, and its near-geometric stage length growth aligns with the update logic of optimistic Q-learning with learning rate $\alpha^t = \frac{T+1}{T+t}$ \cite{jin2018q,bai2019provably}.

At each time step $t$ and state $s^t$, agent $i$ selects its action $a_{i}^{t}$ by following a distribution $\pi_{i}^{t}(\cdot \mid s^t)$, which is updated via the Tsallis-INF adversarial bandit subroutine~\cite{zimmert2021tsallis} (Lines~\ref{line:10}--\ref{line:11}) to guarantee per-state no-regret learning. The normalization factor in the policy update can be computed efficiently using Newton's method, with detailed steps provided in Algorithm~\ref{alg:Newton}.

To obtain the final $\epsilon$-approximate CCE policy, we construct a unified output policy $\bar{\bm{\pi}}$ following the certified policy framework \cite{bai2020near}, as detailed in Algorithm~\ref{alg:certify}. Let $\pi_{i}^{t,k}(\cdot \mid s^{t})$ denote the policy of agent $i$ at step $t$ of episode $k$ under state $s^{t}$ generated by Algorithm~\ref{alg:sbv}. Agents use a shared random seed to uniformly sample an episode index from the previous stage, yielding the final correlated policy. 
This mild common randomness serves as a standard correlation device, which is only used for post-learning policy synchronization and does not break the fully decentralized learning paradigm.

\begin{algorithm}[!tbp]
\caption{Newton’s Method approximation of $\pi_{i}^{t}(a\mid s^t)$ in Tsallis-Inf (Line~\ref{line:11} in Algorithm~\ref{alg:sbv})}\label{alg:Newton}
\begin{algorithmic}[1]
    \STATE \textbf{Input:} $s^t, x, \hat Q_{t,i}(\cdot),\eta_i$ \texttt{//we use x from the last iteration as a warmstart}\;

    \REPEAT
        \STATE $\forall a\in\mathcal{A}_i^t(s^t): \pi_{i}^{t}(a\mid s^t)\gets 4(\eta_i (\hat Q_{t,i}(s^t,a)-x))^{-2}$;
        \STATE $x\gets x-\frac{(\sum_{a\in\mathcal{A}_i^t(s^t)}\hat Q_{t,i}(s^t,a)-1)}{(\eta_i \sum_{a\in\mathcal{A}_i^t(s^t)}\hat Q_{t,i}(s^t,a)^{\frac{3}{2}})}$;
    \UNTIL{convergence}

\end{algorithmic}
\end{algorithm}

\begin{algorithm}[!tbp]
\caption{Construction of the Output Policy $\bar{\bm{\pi}}$}\label{alg:certify}
\begin{algorithmic}[1]
    \STATE \textbf{Input:} The distribution trajectory specified by Algorithm~\ref{alg:sbv}: $\{\pi_{i}^{t,k}:i\in\mathcal{N},t\in \mathcal{T},k\in[K]\}$;

    \STATE Uniformly sample $k$ from $[K]$;
    \FOR{step $t\gets 1$ to $T$}
        \STATE Receive $s^{t}$;
        \STATE Uniformly sample $j$ from $\{1,2,\dots, \tilde{c}^{t,k}(s^{t})\}$; \COMMENT{For a state $s^t$, $\tilde{c}^{t,k}$ denotes the number of visits to the state $s^t$ (at the $t$-th step)  in the stage right before the current stage}\;
        \STATE Set $k \gets \tilde{l}_{j}^{t,k}$; \COMMENT{ $\tilde{l}_{j}^{t,k}$ is the index of the episode that this state was visited the $j$-th time among the total $\tilde{c}^{t,k}$ times} \;
        \STATE Take joint action $\bm{a}^{t}\sim \times_{i=1}^N\pi_{i}^{t,k}(\cdot \mid s^{t})$;
    \ENDFOR
\end{algorithmic}
\end{algorithm}

\subsection{Theoretical Guarantees}
The following theorem presents the sample complexity guarantee of Algorithm~\ref{alg:sbv} for learning CCE in general-sum aggregative Markov games. Our bound matches those established for general-sum Markov games in prior work~\cite{song2021can,jin2021v,mao2022improving}. 

\begin{theorem}\label{thm:cce}
	(Sample complexity of learning CCE). For any $p\in (0,1]$, set $\iota = \log(2NSA_{\max} KT/p)$, and let the agents run Algorithm~\ref{alg:sbv} for $K$ episodes with $K= O(S A_{\max}T^5 \iota/\epsilon^2)$. Then, with probability at least $1-p$, the output policy $\bar{\pi}$ of Algorithm~\ref{alg:certify} is an $\epsilon$-approximate CCE. 
\end{theorem}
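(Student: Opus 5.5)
We follow the template of stage-based V-learning for CCE \cite{mao2022improving} (itself building on \cite{jin2021v,song2021can}) and check where the adaptive stage lengths enter. Fix $(t,s)$ and a stage of length $\tilde{c}$ in which $s$ is visited at step $t$. Let $\check{L}_j$ denote the stage lengths produced by Line~\ref{line:17}. Because $\lambda(s^t)\in[\lambda_{\min},1]$ and $\lambda_{\min}(1+1/T)>1$, every update multiplies the length by a factor $\rho\in[\lambda_{\min}(1+1/T),\,1+1/T]$, up to the floor. This gives two facts, both holding deterministically whatever the aggregate signals are.

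\emph{(a) Geometric growth.} Lengths grow at rate at least $\rho_{\min}:=\lambda_{\min}(1+1/T)>1$. Hence the number of stages for any $(t,s)$ is $O(\log K/\log\rho_{\min})$.

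\emph{(b) Bounded consecutive ratio.} Each stage is at most $(1+1/T)$ times the previous one. This is exactly what keeps the propagated error factors bounded by $(1+1/T)^T\le e$.

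These are the only properties of the fluctuation coefficient we use, so the aggregative structure affects only constants and not the rate. The rate then matches the fixed-$(1+1/T)$ case. We absorb the dependence on $\lambda_{\min}$ into the $O(\cdot)$; this is the point where the constant may depend on $\lambda_{\min}$.

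\textbf{Step 1 (per-state no-regret).} Within a stage at $(t,s)$, agent $i$ faces an adversarial bandit. Its losses $[T-t+1-(r_i^t+\overline V_{t+1,i}(s^{t+1}))]/T\in[0,1]$ are fixed within the stage, because $\overline V_{t+1,i}$ only changes at the ends of stages at step $t+1$. They are adapted to the history, since other agents' play and the next state are random. Tsallis-INF with $\eta=2/\sqrt{\tilde c}$ has regret $O(\sqrt{A_i\tilde c})$ against adaptive adversaries \cite{zimmert2021tsallis}. Combining this with Azuma--Hoeffding to replace realized losses by their conditional expectations, and a union bound over $(i,t,s,\text{stage})$ (which gives the $\iota$), we obtain with probability $1-p/2$:
\[
\max_{\pi}\frac{1}{\tilde c}\sum_{j=1}^{\tilde c}\mathbb E_{a\sim\pi,\,\bm a_{-i}\sim\pi_{-i}^{t,k_j}}\bigl[r_i^t+P^t\overline V_{t+1,i}\bigr]-\frac1{\tilde c}\sum_j\bigl(r_i^{t,k_j}+\overline V_{t+1,i}(s^{t+1,k_j})\bigr)\le 2\sqrt{T^2A_i\iota/\tilde c}.
\]
The bonus $b_{\tilde c}=4\sqrt{T^2A_i\iota/\tilde c}$ dominates this deviation.

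\textbf{Step 2 (optimism).} We show by backward induction on $t$, and by induction over stages, that $\overline V_{t,i}(s)\ge V^{\dagger,\bar{\bm\pi}^{k}_{-i}}_{t,i}(s)$. The right-hand side is the best-response value against the certified output policy started at the current stage. This uses Step 1 together with the structure of Algorithm~\ref{alg:certify}, which samples uniformly from the previous stage's episodes, exactly matching the averaging in $\hat V$.

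We also need a matching pessimistic quantity $\underline V_{t,i}$, with the bonus subtracted, for analysis only. We show $\underline V_{t,i}\le V^{\bar{\bm\pi}}_{t,i}$. This is the Hoeffding/Azuma argument of \cite{mao2022improving} applied to the same stage averages.

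\textbf{Step 3 (gap recursion).} Let $\delta^k_t:=\overline V_{t,i}(s^{t,k})-\underline V_{t,i}(s^{t,k})$. The stage update gives
\[
\delta^k_t\le \mathbb 1[\text{first stage}]\,T+\frac{1}{\check{L}}\sum_{j}\delta^{k_j}_{t+1}+2b_{\check L}+\text{martingale terms}.
\]
We sum over $k$ and regroup each stage's contributions into the following stage. By fact (b), each sample at step $t+1$ is counted with weight at most $(1+1/T)$, and unrolling over $T$ steps costs a factor $e$. The first-stage terms contribute at most $SAT^2$-type lower-order quantities, since the initial length is $T$. The bonus terms contribute
\[
\sum_{t}\sum_s\sum_{\text{stages}}\check L\cdot\sqrt{T^2A\iota/\check L}\le \sum_{t,s}\sqrt{T^2A\iota}\sqrt{\#\text{stages}\cdot n_{t,s}}.
\]
Applying Cauchy--Schwarz with $\sum_s n_{t,s}=K$ and fact (a) gives $O\bigl(T^2\sqrt{SA\iota K}\cdot\sqrt{\text{stage-count factor}}\bigr)$. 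Accounting for the geometric stage structure exactly as in \cite{mao2022improving}, where the last stage holds a constant fraction of visits, the total is $\sum_k\delta^k_1=O\bigl(\sqrt{SA_{\max}T^5\iota K}\bigr)$. Dividing by $K$ and using Steps 2--3,
\[
V^{\dagger,\bar{\bm\pi}_{-i}}_{1,i}-V^{\bar{\bm\pi}}_{1,i}\le \frac1K\sum_k\delta_1^k\le\epsilon
\quad\text{when } K=O\!\left(\frac{SA_{\max}T^5\iota}{\epsilon^2}\right).
\]

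\textbf{Main obstacle.} We expect the main obstacle to be Step 3's regrouping when stage lengths are data-dependent and agent-specific. Although all agents see the same states and the same aggregate $d^t$, each computes $\lambda$ from its own $\mathcal D^t(s)$. If the aggregate is common knowledge, the stages are synchronized across agents, which Algorithm~\ref{alg:certify} implicitly requires; we would first try to establish this synchronization. The remaining issue is that $\check L$ is random. We handle it by noting that only the deterministic bounds (a) and (b) are used in the regrouping, so the argument goes through pathwise. The concentration steps are still valid, because stage boundaries are stopping times, and the union bound over at most $K$ possible stages per $(t,s)$ costs only $\log K\subset\iota$.
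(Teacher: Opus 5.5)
Your Steps 1--3 follow the paper's proof. You use the bonus lemma (Azuma--Hoeffding plus Tsallis-INF), then optimism/pessimism of $\overline V,\underline V$ through the certified policy. Next comes the recursion $\delta^{t,k}\le\mathbb{I}[\cdot]\,T+\frac{1}{\tilde c}\sum_j\delta^{t+1,\tilde l_j}+2b_{\tilde c}$, regrouped using the $(1+\frac1T)$ ratio, and finally the bonus sum. Bounding the bonus sum by Cauchy--Schwarz with a stage count, instead of the paper's explicit geometric sums, costs only $\sqrt{\log K}$. Incidentally, the last stage holds an $O(1/T)$ fraction of the visits, not a constant fraction.

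\textbf{The gap is your fact (a).} All of Step 3 rests on it, and the paper's Term 3 makes the same unproved claim of near-geometric growth. The floor in Line~\ref{line:17} is not harmless rounding. From the initial length $\tilde L=T$, any $\lambda\in[\lambda_{\min},1)$ gives $\lambda(1+\frac1T)T=\lambda(T+1)\in(T,T+1)$, so the new length is again $T$. In general, growth from $L$ requires $\lambda\ge\frac{T(L+1)}{(T+1)L}$. Suppose $\lambda_{\min}<1$ and the fluctuation statistic stays below its cap ($CV<CV_{\max}$, or $MAD<MAD_{\max}$). Then the stages at $(t,s)$ keep length $T$ forever, every later episode carries $b_{\tilde c}=4\sqrt{TA_i\iota}$, the bonus sum is $\Theta(K\sqrt{TA_i\iota})$, and nothing decays. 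The certified policy then mixes Tsallis-INF runs of only $T$ rounds restarted from uniform, so its CCE gap need not vanish as $K\to\infty$.

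A repair needs a modified rule, e.g.\ $\tilde L\gets\max\{\tilde L+1,\lfloor\lambda(1+\frac1T)\tilde L\rfloor\}$, which keeps your fact (b) since $\tilde L\ge T$. Let $c_0:=\lambda_{\min}(T+1)-T\in(0,1]$. Lengths then grow at rate roughly $1+c_0/T$ once $\tilde L\gtrsim T/c_0$, your stage count becomes $O(T\log K/c_0)$, and the gap bound picks up a factor $1/\sqrt{c_0}$ (so $K$ picks up $1/c_0$). Because $c_0$ can be arbitrarily small in the admissible range of $\lambda_{\min}$, this cannot be silently absorbed into $O(\cdot)$.

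Two further steps must be supplied rather than assumed; the paper glosses over both.

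\textbf{Synchronization cannot be derived from the model.} In an AMG, agent $i$'s payoff depends on $\sigma(\bm a_{-i})$. So the aggregate an agent can recover, and hence its $\mathcal D^t(s)$, $\lambda$ and stage boundaries, is agent-specific in general. Algorithm~\ref{alg:certify} and the recursion for $V^{\bar{\bm\pi}^{t,k}}_{t,i}$ need common $\tilde c^{t,k}$ and $\tilde l^{t,k}_j$. The paper's remark that these are computable from common state observations ignores that $\tilde L$ depends on $\mathcal D^t(s)$. You need an explicit assumption, e.g.\ that $d^t$ is a common signal such as total effort in the Fishermen example.

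\textbf{Tsallis-INF gives only an expectation bound.} The cited theorem bounds expected pseudo-regret against a fixed comparator. Step 1 needs the regret with $\max_\pi$ inside, holding with high probability for every stage at once, so it can be union-bounded and used pathwise in the optimism induction. That does not follow from an expectation bound, since plain importance-weighted estimates are heavy-tailed. You need a high-probability bandit guarantee there, e.g.\ an implicit-exploration variant as in V-learning.

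\textbf{Minor point.} The losses are not fixed within a stage. $\overline V_{t+1,i}(s')$ is updated on the schedule of $(t+1,s')$, which is not aligned with the stages of $(t,s)$. Treating the losses as adaptive, as you then do, is what is required.
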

\textit{Proof sketch:}
We provide a high-level overview of the proof and defer all technical details to Appendix~\ref{app:cce}.

The analysis proceeds in four main steps.

\textit{Step 1 (Estimation and regret control).}
We first establish a high-probability bound on the stage-wise value estimation error. By combining martingale concentration arguments with the no-regret guarantee of the Tsallis-INF algorithm, we show that the empirical value estimates concentrate around their expectations up to an error of order $O(\sqrt{T^2 A_i \iota / \tilde{c}})$, which justifies the choice of the exploration bonus used in the algorithm (Line~\ref{line:14}).

\textit{Step 2 (Confidence bounds).}
We construct optimistic upper value estimates $\overline{V}$ and pessimistic lower value estimates $\underline{V}$, and prove that they form valid high-probability confidence bounds. In particular, with high probability,
\[
V_{t,i}^{\star,\bar{\pi}_{-i}^{t,k}}(s^t) \leq \overline{V}_{t,i}^k(s^t), 
\quad
V_{t,i}^{\bar{\bm\pi}^{t,k}}(s^t) \geq \underline{V}_{t,i}^k(s^t),
\]
reducing the problem of bounding the CCE equilibrium gap to bounding the difference between these two estimates, $\overline{V}-\underline{V}$.

\textit{Step 3 (Recursive error propagation).}
Define $\delta^{t,k} := \overline{V}_{t,i}^k(s^{t,k}) - \underline{V}_{t,i}^k(s^{t,k})$. We show that $\delta^{t,k}$ satisfies a recursive inequality:
\[
\delta^{t,k} \leq \mathbb{I}[c^{t,k}=0]T + \frac{1}{\tilde{c}}\sum_{j=1}^{\tilde{c}} \delta^{t+1,\tilde{l}_j} + O(b_{\tilde{c}}).
\]
Unrolling this recursion and summing over episodes yields
\[
\sum_{k=1}^K \delta^{1,k} \leq O\big(ST^3 + \sqrt{SA_{\max}KT^5\iota}\big).
\]

\textit{Step 4 (Conclusion).}
Averaging over episodes and using the definition of the output policy $\bar{\bm\pi}$, we obtain
\[
V_{1,i}^{\star,\bar{\bm\pi}_{-i}}(s^1) - V_{1,i}^{\bar{\bm\pi}}(s^1)
\leq O\left(\sqrt{SA_{\max}T^5\iota/K}\right),
\]
which establishes the claimed sample complexity result.

\textit{Key Lemmas:}
The proof relies on the following intermediate results, whose detailed proofs are provided in Appendix~\ref{app:cce}.

\begin{lemma}[Estimation and regret bound]
	With probability at least $1-\frac{p}{2}$, it holds for all agents $i \in \mathcal{N},$ episodes $k \in [K],$  steps $t \in \mathcal{T},$  and states $ s^t \in \mathcal{S}^t$ that
\begin{align}
	&\max_{\pi_{i}^t} \frac{1}{\tilde{c}} \sum_{j=1}^{\tilde{c}} \mathbb{D}_{\pi_{i}^t\times \pi_{-i}^{t,\tilde{l}_j}}\left(r_{i}^t + \mathbb{P}_t\overline{V}_{t+1,i}^{\tilde{l}_j}\right)(s^t)\notag \\ &- \frac{1}{\tilde{c}} \sum_{j=1}^{\tilde{c}} \left(r_{i}^t(s^t,\bm{a}^{t,\tilde{l}_j}) + \overline{V}_{t+1,i}^{\tilde{l}_j}(s^{t+1,\tilde{l}_j}) \right) \leq 4\sqrt{T^2 A_i\iota/\tilde{c}}, \notag
    \end{align}
	 where $\iota = \log(4NS\amax KT/p)$.
\end{lemma}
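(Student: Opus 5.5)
We fix an agent $i$, a step $t$, a state $s^t$ and a stage of that pair, and let $\tilde l_1<\dots<\tilde l_{\tilde c}$ be the episodes in which $s^t$ was visited during the stage. The key structural fact is that $\overline V_{t+1,i}$ changes only at stage boundaries of $(t+1,\cdot)$ pairs, not within the inner bandit loop at $(t,s^t)$. Moreover, the opponents' marginal $\pi_{-i}^{t,\tilde l_j}$ is fixed before $a_i^{t,\tilde l_j}$ is drawn. So, viewed from agent $i$ at $(t,s^t)$, the $j$-th visit is one round of an adversarial multi-armed bandit with $|\mathcal A_i^t(s^t)|\le A_i$ arms. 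The loss vector of round $j$ is
\[
\ell_j(a)=\frac{T-t+1-\bigl(r_i^t(s^t,a,\sigma(\bm a_{-i}))+\overline V_{t+1,i}^{\tilde l_j}(s^{t+1})\bigr)}{T}\in[0,1],
\]
and it is evaluated at the realized opponent actions and next state. This is exactly the loss fed to Tsallis-INF in Line~\ref{line:10}. The aggregative structure plays no role beyond this, and neither does the adaptive stage length: within a stage, the policy is reset and $\eta_i=2/\sqrt{\tilde c}$ uses the in-stage counter, so the standard anytime analysis applies.

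I split the left-hand side into three terms. For a fixed comparator $\pi_i^t$ (equivalently, a fixed action $a$, since the max of a linear function is attained at a vertex), we write
\[
\frac1{\tilde c}\sum_j\Bigl[\mathbb D_{a\times\pi_{-i}^{t,\tilde l_j}}(r+\mathbb P\overline V)-(T-T\ell_j(a))\Bigr]
+\frac{T}{\tilde c}\sum_j\Bigl[\langle\pi_i^{t,\tilde l_j},\ell_j\rangle-\ell_j(a)\Bigr]
+\frac{T}{\tilde c}\sum_j\Bigl[\ell_j(a^{t,\tilde l_j}_i)-\langle\pi_i^{t,\tilde l_j},\ell_j\rangle\Bigr].
\]
Adding these terms back gives the target expression, because $T-T\ell_j(a_i^{t,\tilde l_j})$ equals the realized $r+\overline V$ plus a constant that cancels.

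\textbf{Term 2 (regret).} This is the Tsallis-INF regret. The high-probability version (or the expected regret bound $4\sqrt{A_i\tilde c}$ of \cite{zimmert2021tsallis}, combined with concentration of the importance-weighted estimates) gives at most $O(\sqrt{A_i\tilde c\,\iota})$. After the factor $T/\tilde c$, this becomes $O(\sqrt{T^2A_i\iota/\tilde c})$.

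\textbf{Terms 1 and 3 (concentration).} Both are sums of martingale differences bounded by $T$ with respect to the filtration generated by the episode history up to the $j$-th visit. For term 3, the difference between the realized loss of the sampled action and its conditional mean is bounded by 1. For term 1, the difference between the conditional expectation over $(\bm a_{-i},s^{t+1})$ and the realization is bounded by $T$. Azuma--Hoeffding gives $O(T\sqrt{\tilde c\,\iota})$ for each fixed $a$. We then take a union bound over $a\in\mathcal A_i^t(s^t)$, over $i,t,s^t$, and over all possible values of $\tilde c$ and stage end times ($\le K$). This yields the $\log(4NSA_{\max}KT/p)$ inside $\iota$ with total failure probability $p/2$. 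Dividing by $\tilde c$ gives $O(\sqrt{T^2\iota/\tilde c})\le O(\sqrt{T^2A_i\iota/\tilde c})$.

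\textbf{Main obstacle.} The delicate step is making the regret bound hold with high probability rather than in expectation, with constants summing to $4$. Tsallis-INF is usually stated as a pseudo-regret bound, and importance-weighted losses are not bounded. My first attempt is to use that the comparator is one of finitely many fixed actions. I would then apply Freedman's inequality to $\hat\ell_j(a)-\ell_j(a)$, whose conditional variance is at most $1/\pi_j(a)$. I would control $\sum_j 1/\pi_j(a)$ via the Tsallis-INF potential, or, if that fails, fall back to an implicit-exploration style bias argument. A second subtlety is that the stage endpoint $\tilde c$ is random, since $\tilde L$ depends on the adaptive $\lambda$. I would handle it by proving the bound simultaneously for all prefix lengths $\tilde c\le K$, which only costs the $K$ already inside $\iota$.
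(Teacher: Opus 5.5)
\textbf{The gap: your Term 2 is asserted, not proved.} Term 2 carries the entire content of the lemma. The Zimmert--Seldin bound $4\sqrt{A_i\tilde c}+1$ controls \emph{expected} pseudo-regret. The lemma instead needs a bound on an event of probability $1-p/2$, holding uniformly over $(i,k,t,s^t)$.

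With the unbiased importance-weighted estimates $\hat\ell_j$ of Line~\ref{line:10}, the regret against a fixed arm $a$ splits into three pieces: the pathwise FTRL bound; $\sum_j(\langle\pi_j,\ell_j\rangle-\ell_j(a_j))$, which has bounded increments and is harmless; and $\sum_j(\hat\ell_j(a)-\ell_j(a))$. You need the \emph{upper} tail of this last sum, and that is where things break:
\begin{itemize}
\item Its increments can be as large as $1/\pi_j(a)$, and its predictable variance is $\sum_j\ell_j(a)^2/\pi_j(a)$.
\item Take an arm whose estimated cumulative loss exceeds the smallest one by order $j$. Line~\ref{line:11} then gives $\pi_j(a)=\Theta(1/j)$, so this variance is of order $\tilde c^2$.
\item Such an arm can still be best in hindsight in an adversarial sequence. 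Freedman then yields a deviation of order $\tilde c$, which is vacuous after scaling by $T/\tilde c$.
\end{itemize}

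So your first plan does not go through: the Tsallis potential controls $\sum_j\eta_j\sum_a\sqrt{\pi_j(a)}$, not $\sum_j 1/\pi_j(a)$. This heavy upper tail is exactly why Exp3-type methods need explicit or implicit exploration to get high-probability bounds. Your implicit-exploration fallback would work, but it replaces the estimator in Line~\ref{line:10}, so it proves the lemma for a different algorithm.

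The paper does not close this hole either. It splits the left-hand side into one Azuma term plus $R^\star_{\tilde c}$. It then plugs the pseudo-regret bound in for the random quantity $R^\star_{\tilde c}$ as though that bound held pathwise. You have located the real difficulty correctly, but neither argument establishes the lemma for Algorithm~\ref{alg:sbv} as specified. A genuine fix needs a bandit subroutine with a high-probability guarantee under adaptive, noisy losses, as in V-learning, with the resulting $\log(1/\delta)$ absorbed into $\iota$.

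\textbf{Two smaller, fixable points.} First, your $\ell_j(a)$ plugs in the realized next state. But $s^{t+1}\sim p^t(\cdot\mid s^t,a_i^{t,\tilde l_j},\bm a_{-i})$ depends on agent $i$'s own draw. Hence $\ell_j(a)$ is undefined for $a\neq a_i^{t,\tilde l_j}$. With the realized state substituted, $\ell_j$ is not fixed before $a_i^{t,\tilde l_j}$ is drawn, so Terms 1 and 3 are not martingale-difference sums.

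You can fix this in one of two ways:
\begin{itemize}
\item Introduce one ghost next state per action.
\item Do what the paper does. It sets $\ell_j(a_i)=\mathbb{E}_{a_{-i}\sim\pi_{-i}^{t,\tilde l_j}}[T-t+1-r_i^t-\mathbb{P}_t\overline V_{t+1,i}^{\tilde l_j}]/T$ and treats the observed value as unbiased noisy feedback.
\end{itemize}
Under the paper's choice, your Terms 1 and 3 merge into a single Azuma step. It compares the realized $r_i^t+\overline V_{t+1,i}^{\tilde l_j}(s^{t+1,\tilde l_j})$ with $\mathbb{D}_{\bm\pi^{t,\tilde l_j}}(r_i^t+\mathbb{P}_t\overline V_{t+1,i}^{\tilde l_j})(s^t)$, which already averages over agent $i$'s own randomization. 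This step costs $\sqrt{2T^2\iota/\tilde c}\le\sqrt{T^2A_i\iota/\tilde c}$ for $A_i\ge2$. That leaves $3\sqrt{T^2A_i\iota/\tilde c}$ for the regret, which is how the constant $4$ is reached.

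Second, Term 1 should subtract $T-t+1-T\ell_j(a)$, not $T-T\ell_j(a)$. As written it is off by $t-1$ and is not mean-zero.
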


\begin{lemma}[Confidence bounds]
It holds with probability at least $1-p$ that for all  for all agents $i \in \mathcal{N},$ episodes $k \in [K],$  steps $t \in \mathcal{T},$  and states $ s^t \in \mathcal{S}^t$ ,
\[
V_{t,i}^{\star,\bar{\pi}_{-i}^{t,k}}(s^t) \leq \overline{V}_{t,i}^k(s^t), 
\quad
V_{t,i}^{\bar{\bm\pi}^{t,k}}(s^t) \geq \underline{V}_{t,i}^k(s^t).
\]
\end{lemma}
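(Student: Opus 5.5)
We prove the confidence-bound lemma by backward induction on $t$, working on the event of probability at least $1-\frac{p}{2}$ from the estimation-and-regret lemma. A further martingale concentration event of probability at least $1-\frac{p}{2}$ will handle the lower bound, and a union bound then gives $1-p$.

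\textbf{Definitions.} First fix the objects. For each $(i,t,k,s^t)$, let $\tilde c=\tilde c^{t,k}(s^t)$ and $\tilde l_j=\tilde l_j^{t,k}$ be the count and episode indices of the stage preceding episode $k$, as in Algorithm~\ref{alg:certify}. Define $\bar{\bm\pi}^{t,k}$ as the correlated policy that Algorithm~\ref{alg:certify} executes from step $t$ onward when started at $(t,k)$. That is, draw $j$ uniformly from $[\tilde c]$, play $\times_i \pi_i^{t,\tilde l_j}(\cdot\mid s^t)$, and then recurse with $\bar{\bm\pi}^{t+1,\tilde l_j}$. Define the pessimistic estimate recursively by $\underline V^k_{T+1,i}\equiv 0$ and
\[
\underline{V}_{t,i}^k(s^t)=\max\Big\{0,\ \frac{1}{\tilde c}\sum_{j=1}^{\tilde c}\big(r_i^t(s^t,\bm a^{t,\tilde l_j})+\underline V^{\tilde l_j}_{t+1,i}(s^{t+1,\tilde l_j})\big)-b_{\tilde c}\Big\},
\]
with $\underline V^k_{t,i}=0$ when $\tilde c=0$. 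The $\overline V$ recursion is the one in Algorithm~\ref{alg:sbv}: we set $\overline V=T-t+1$ before the first stage is completed, and clip at $T-t+1$ afterwards. The base case $t=T+1$ is trivial, and the unvisited case $\tilde c=0$ holds by the trivial bounds $0\le V\le T-t+1$.

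\textbf{Upper bound.} Assume $V^{\star,\bar\pi^{t+1,k'}_{-i}}_{t+1,i}\le \overline V^{k'}_{t+1,i}$ for all $k'$. The best response to the correlated policy $\bar{\bm\pi}^{t,k}_{-i}$ cannot depend on the sampled $j$. Hence
\[
V_{t,i}^{\star,\bar\pi^{t,k}_{-i}}(s^t)\le \max_{\pi_i^t}\frac1{\tilde c}\sum_{j}\mathbb D_{\pi_i^t\times\pi_{-i}^{t,\tilde l_j}}\big(r_i^t+\mathbb P_t V^{\star,\bar\pi^{t+1,\tilde l_j}_{-i}}_{t+1,i}\big)(s^t).
\]
This step is delicate. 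The future best response may depend on $j$ only through the history, and the inequality is valid because it upper bounds the value of a fixed deviation by an average over $j$ of per-$j$ continuations. The induction hypothesis and monotonicity of $\mathbb D,\mathbb P_t$ replace $V^\star_{t+1}$ by $\overline V^{\tilde l_j}_{t+1,i}$. The estimation-and-regret lemma then bounds the right-hand side by the empirical average plus $4\sqrt{T^2A_i\iota/\tilde c}=b_{\tilde c}$, which is exactly $\hat V_{t,i}$ of Line~\ref{line:14}. Since $V^\star\le T-t+1$ always, the clipping preserves the inequality, giving $V^{\star}\le\overline V^k_{t,i}$.

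\textbf{Lower bound.} By definition,
\[
V_{t,i}^{\bar{\bm\pi}^{t,k}}(s^t)=\frac1{\tilde c}\sum_j \mathbb D_{\pi^{t,\tilde l_j}}\big(r_i^t+\mathbb P_tV^{\bar{\bm\pi}^{t+1,\tilde l_j}}_{t+1,i}\big)(s^t).
\]
The induction hypothesis makes this at least $\frac1{\tilde c}\sum_j\mathbb D_{\pi^{t,\tilde l_j}}(r_i^t+\mathbb P_t\underline V^{\tilde l_j}_{t+1,i})$. Azuma--Hoeffding then applies to the martingale differences between this conditional expectation and the realized $r_i^t+\underline V^{\tilde l_j}_{t+1,i}(s^{t+1,\tilde l_j})$. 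These differences are bounded by $T$, and they are adapted because $\underline V^{\tilde l_j}_{t+1}$ is determined before episode $\tilde l_j$. The resulting gap is at most $\sqrt{2T^2\iota/\tilde c}\le b_{\tilde c}$. A union bound over $i,t,s,k$ and the possible values of $\tilde c$ fits within $\iota$ and probability $p/2$. Together with $V\ge 0$, this gives $V\ge\underline V$.

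\textbf{Main obstacle.} The main obstacle is the upper-bound step, for two reasons. First, the inequality must be justified for a non-Markov correlated $\bar{\bm\pi}_{-i}$ whose continuation depends on the sampled $j$. I would handle this by writing the deviation value as an expectation over $j$ and bounding each continuation by its own best response. Second, the $\overline V^{\tilde l_j}_{t+1}$ appearing in the estimation-and-regret lemma must be the same quantity used in the algorithm's update at Line~\ref{line:10}.
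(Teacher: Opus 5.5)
Your proposal is correct and follows essentially the same argument as the paper. For $\overline V$, you decompose the deviation value over the sampled index $j$, apply the induction hypothesis at step $t+1$, invoke the estimation-and-regret lemma and the definition of $b_{\tilde c}$, and clip at $T-t+1$; for $\underline V$, you use the mirror-image Azuma--Hoeffding argument. The one difference is organizational: you induct backward on $t$ over all episodes at once, while the paper inducts on $k$ and needs a separate Case~2 (constancy of $\overline V^k_{t,i}$ and $\bar{\bm\pi}^{t,k}$ within a stage), which your indexing by the preceding stage of every $k$ makes unnecessary.
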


\section{Simulations}\label{sec:simulations}
We empirically evaluate Algorithm~\ref{alg:sbv} on the classic Fishermen Game. 

The Fishermen Game is a simple $2$-horizon, $2$-agent Markov game adapted from \cite{maschler2020game}. It has two states $\mathcal{S}^t=\{s_h, s_\ell\}$ for $t=1,2$, where $s_h$ and $s_\ell$ denote high and low fish stock levels, respectively. Each fisherman has a binary action set $\mathcal{A}_i^t = \{a_m, a_f\}$ for $i = 1,2$ and $t=1,2$, where \( a_m = 5 \) represents intensive fishing with many nets, and \( a_f = 3 \) means mild fishing with few nets.

We define the \emph{aggregate action} as the total fishing effort across both agents, given by $a = a_1 + a_2$ where $a_i \in \mathcal{A}_i^t$.

The reward function for agent \( i \) in state \( s \) is formulated as an aggregative payoff (preserving the original game's reward values and structure):
\[
r_i(s, \bm a)=r_i(s, a_1, a_2)= r_i(s, a_i, a) = f(a_i) - g(a) - c(s)
\]
where  the concave private payoff function is  \( f(a_i) = -\frac{1}{2}a_i^2 + \frac{23}{2}a_i + 1 \), the convex aggregate cost function is \( g(a) = \frac{1}{4}a^2 - \frac{1}{2}a + 16 \), and the state-dependent cost is \( c(s_h) = 0 \) (no extra cost in high stock state), \( c(s_\ell) = 1 \) (extra cost in low stock state). The resulting payoff matrices (rows: agent 1, columns: agent 2) are:
\[
\begin{array}{c|cc}
s_h & a_m\ (5) & a_f\ (3) \\
\hline
a_m\ (5) & (10, 10) & (18, 3) \\
a_f\ (3) & (3, 18) & (9, 9)
\end{array}
\quad\quad
\begin{array}{c|cc}
s_\ell & a_m\ (5) & a_f\ (3) \\
\hline
a_m\ (5) & (9, 9) & (17, 2) \\
a_f\ (3) & (2, 17) & (8, 8)
\end{array}
\]

State transitions depend only on the current state \( s \) and all players' actions $\bm a$ or the aggregate action \( a \).Let $P(s'\mid s,a)$ denote the transition probability to state $s'$ from $s$ given aggregate action $a$:
\[
\begin{aligned}
&P(s_h \mid s_h, 6) = 1,\; P(s_h \mid s_h, 8) = \tfrac{2}{3},\; P(s_h \mid s_h, 10) = \tfrac{1}{5}, \\
&P(s_h \mid s_\ell, 6) = 1,\; P(s_h \mid s_\ell, 8) = \tfrac{1}{2},\; P(s_h \mid s_\ell, 10) = 0,
\end{aligned}
\]
with $P(s_\ell \mid s,a) = 1 - P(s_h \mid s,a)$. Consistent with the original design, a higher aggregate fishing effort $a$ increases the likelihood of transitioning to the low fish stock state $s_\ell$.
\begin{figure}[tbp]
    \centering
    \includegraphics[width=1.0\linewidth]{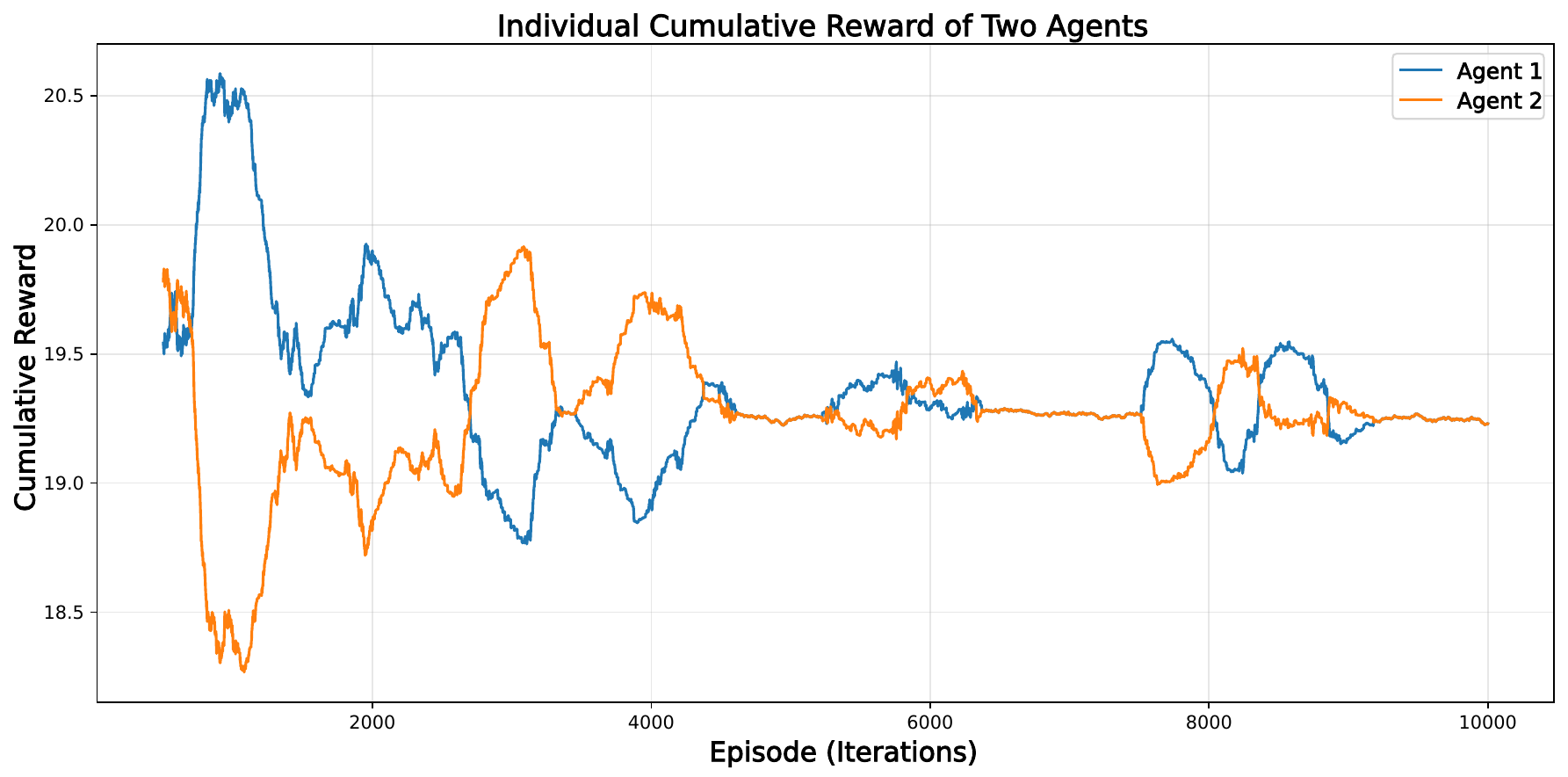}
    \caption{Individual cumulative reward of two agents.}
    \label{fig:agent_rewards}
\end{figure}

\begin{figure}[tbp]
    \centering
    \includegraphics[width=1.0\linewidth]{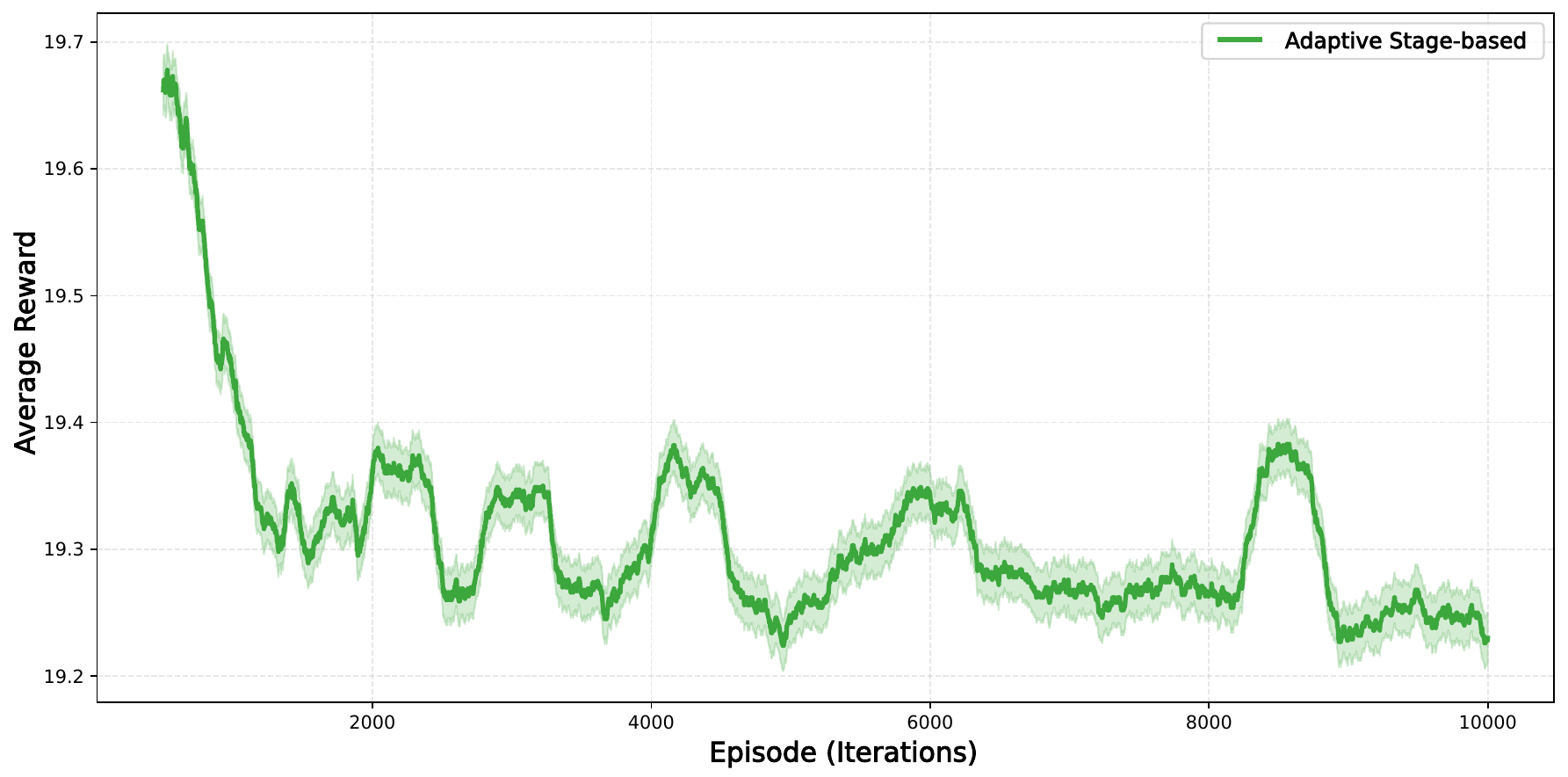}
    \caption{Average reward of two agents.}
    \label{fig:cce_convergence}
\end{figure}

\begin{figure}[tbp]
    \centering
    \includegraphics[width=1.0\linewidth]{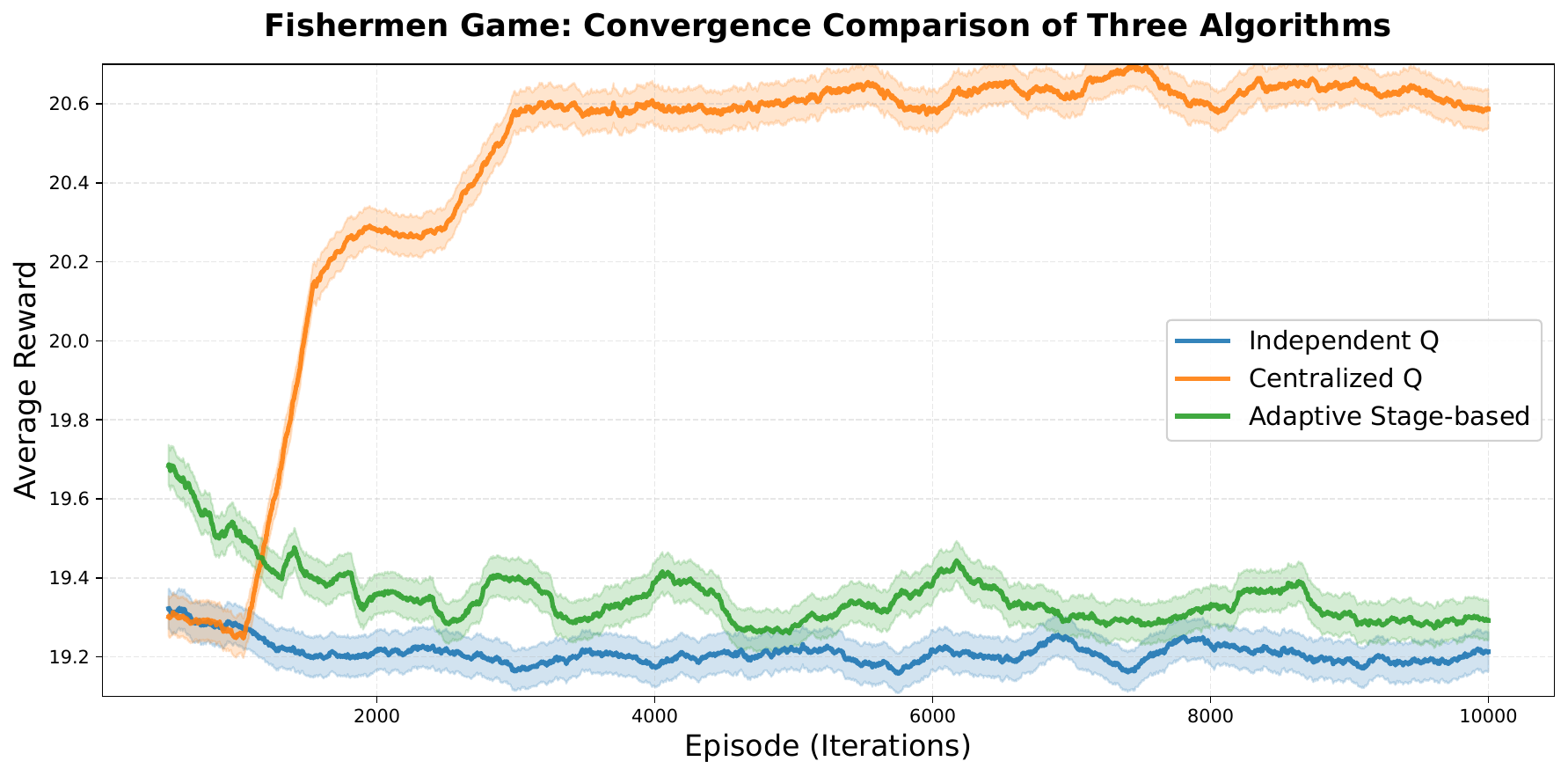}
    \caption{Rewards comparison of three algorithms on the Fishing Game. ``Centralized Q'' denotes a centralized oracle that controls all agents' actions to maximize the joint reward of both agents.
``Independent Q'' means each agent runs a naive single-agent Q-learning algorithm independently, taking greedy actions based only on local information without considering other agents. }
    \label{fig:three_algos}
\end{figure}

Figures~\ref{fig:agent_rewards} and \ref{fig:cce_convergence} illustrate the individual cumulative rewards of the two agents and the average reward, respectively. The results show that the actual policy trajectories generated by our algorithm converge to stable high rewards (around $19.2$–$19.5$ per agent after sufficient episodes), which is a desirable empirical behavior beyond our theoretical guarantee (our theoretical results only guarantee convergence for the certified output policy). Moreover, Figure~\ref{fig:agent_rewards} clearly exhibits the stage-wise behavior of our algorithm: rewards fluctuate 
noticeably at the beginning of each stage and then stabilize as learning proceeds within the stage.

Finally, we compare our algorithm with two representative baselines in Figure~\ref{fig:three_algos}: centralized Q-learning (an ideal upper bound achieved by a fully coordinated oracle that can control the actions of both agents) and naive independent Q-learning (where each agent ignores other agents and the game structure). Empirical results show that our algorithm outperforms the independent baseline, demonstrating the effectiveness of our adaptive stage-based design in the decentralized setting.

\section{Conclusion}\label{sec:conclusions}
In this paper, we studied decentralized learning of Coarse Correlated Equilibrium (CCE) in aggregative Markov games (AMGs), a class of multi-agent systems where each agent's reward depends only on its own action and an aggregate of others' actions. We proposed an adaptive stage-based V-learning algorithm that explicitly exploits the aggregative structure to enable efficient learning in a fully decentralized setting. 
We established a sample complexity guarantee showing that the proposed method learns an $\epsilon$-approximate CCE within $\widetilde{O}(S A_{\max}T^5 /\epsilon^2)$ episodes, while notably avoiding the curse of multiagents.  These results demonstrate that the aggregative structure can be harnessed to design efficient decentralized learning algorithms.

Several interesting directions remain for future research. First, it would be valuable to further tighten the sample complexity bounds, potentially by refining the analysis or leveraging sharper concentration techniques. Second, extending the proposed framework to more general classes of structured Markov games beyond the aggregative setting is a natural next step. Finally, it would be of interest to investigate whether similar ideas can be applied to learning other equilibrium concepts, such as correlated equilibrium or Nash equilibrium, in decentralized environments.

\bibliographystyle{IEEEtran}
\bibliography{refs}
\appendices

\section{Proofs for Section~\ref{sec:cce_ce}}\label{app:cce}

We first introduce a few notations to facilitate the analysis. For a step $t\in \mathcal{T}$ of an episode $k\in[K]$, we denote by $s^{t,k}$ the state that the agents observe at this time step. For any state $s^t\in\mathcal{S}^t$, we let $\pi_{i}^{t,k}(\cdot \mid s^t)\in\Delta(\mathcal{A}_i^t(s^t))$ be the distribution prescribed by Algorithm~\ref{alg:sbv} to agent $i$ at this step. Notice that such notations are well-defined for every $s^t \in\mathcal{S}^t$ even if $s^t$ might not be the state $s^{t,k}$ that is actually visited at the given step. We further let $\pi_{i}^{t,k} = \{\pi_{i}^{t,k}(\cdot\mid s^t):s^t \in\mathcal{S}^t\}$, and let $a_{i}^{t,k}\in\mathcal{A}_i^t(s^t)$ be the actual action taken by agent $i$. For any $s^t \in\mathcal{S}^t$, let $\tilde{C}^{t,k}(s^t)$ denotes the value of $\tilde{C}^{t}(s^t)$ at the \emph{beginning} of the $k$-th episode. Note that it is proper to use the same notation to denote these values from all the agents' perspectives, because the agents maintain the same estimates of these terms as they can be calculated from the common observations (of the state-visitation). We also use $\overline{V}_{t,i}^{k}(s^t)$ and $\hat{V}_{t,i}^{k}(s^t)$ to denote the values of $\overline{V}_{t,i}(s^t)$ and $\hat{V}_{t,i}(s^t)$, respectively, at the beginning of the $k$-th episode from agent $i$'s perspective.

Further, for a state $s^{t,k}$, let $\tilde{c}^{t,k}$ denote the number of times that state $s^{t,k}$ has been visited (at the $t$-th step) in the stage right before the current stage, and let $\tilde{l}_{j}^{t,k}$ denote the index of the episode that this state was visited the $j$-th time among the $\tilde{c}^{t,k}$ times. For notational convenience, we use $\tilde{c}$ to denote $\tilde{c}^{t,k}$, and $\tilde{l}_j$ to denote $\tilde{l}_{j}^{t,k}$, whenever $t$ and $k$ are clear from the context. With the new notations, the update rule in Line~\ref{line:14} of Algorithm~\ref{alg:sbv} can be equivalently expressed as
\begin{equation}\label{eqn:update_rule_new}
\hat{V}_{t,i}(s^t) \gets \frac{1}{\tilde{c}}\sum_{j=1}^{\tilde{c}} \left(r_{i}^t(s^t,\bm{a}^{t,\tilde{l}_j}) + \overline{V}_{t+1,i}^{\tilde{l}_j}(s^{t+1,\tilde{l}_j}) \right)+ b_{\tilde{c}}. 
\end{equation}

To streamline the Bellman equations for the value and Q-functions, we introduce two auxiliary operators. First, for any stage-$t$ state-action pair $(s^t,\bm{a}^t)$ and any value function $V$ defined on the stage-$t+1$ state space $\mathcal{S}^{t+1}$, the state transition operator is
$$
\mathbb{P}_{t} V(s^t, \bm{a}^t) := \mathbb{E}_{s^{t+1} \sim p^{t}(\cdot \mid s^t, \bm{a}^t)} V\left(s^{t+1}\right).
$$
Second, for a joint policy profile $\bm\pi = (\pi_i^t)_{i\in\mathcal{N},t\in\mathcal{T}}$, we define the stage-$t$ marginal joint decision rule as $\boldsymbol{\pi}^t := (\pi_1^t, \pi_2^t, \dots, \pi_N^t)$, which is the collection of stage-$t$ decision rules of all agents extracted from the full joint Markov policy $\bm\pi$. For any stage-$t$ state $s^t$ and any Q-function $Q$ defined on $(s^t,\mathcal{A}^t(s^t))$, the policy expectation operator is
$$
\mathbb{D}_{\boldsymbol{\pi}^t} Q(s^t) := \mathbb{E}_{\bm{a}^t \sim \boldsymbol{\pi}^t(\cdot \mid s^t)} Q(s^t, \bm{a}^t),
$$
where $\bm{a}^t \sim \boldsymbol{\pi}^t(\cdot \mid s^t)$ means the joint action $\bm{a}^t$ is sampled from the product distribution induced by $\boldsymbol{\pi}^t$ at state $s^t$.

With these notations and definitions, the Bellman equations for the joint value and Q-functions can be written succinctly as
\begin{align}
Q_{t}^{\bm\pi}(s^t, \bm{a}^t) &= \left(r_{t} + \mathbb{P}_{t} V_{t+1}^{\bm\pi}\right)(s^t, \bm{a}^t), \label{eqn:bellman-Q} \\
V_{t}^{\bm\pi}(s^t) &= \left(\mathbb{D}_{\boldsymbol{\pi}^t} Q_{t}^{\bm\pi}\right)(s^t) \label{eqn:bellman-V}
\end{align}
for all $t \in \mathcal{T}$, $s^t \in \mathcal{S}^t$, and $\bm{a}^t\in \mathcal{A}^t(s^t)$.

In the following proof, we assume without loss of generality that the initial state $s^1$ is fixed (i.e., the initial state distribution $\rho$ is a point mass at $s^1$). Our proof can be straightforwardly generalized to the case where $s^1$ is drawn from an arbitrary fixed distribution $\rho\in\Delta(\mathcal{S}^1)$.

In the following, we start with an intermediate result, which justifies our choice of the bonus term. 

\begin{lemma}\label{lemma:bonus}
	With probability at least $1-\frac{p}{2}$, it holds for all agents $i \in \mathcal{N},$ episodes $k \in [K],$  steps $t \in \mathcal{T},$  and states $ s^t \in \mathcal{S}^t$ that
\begin{align}
	&\max_{\pi_{i}^t} \frac{1}{\tilde{c}} \sum_{j=1}^{\tilde{c}} \mathbb{D}_{\pi_{i}^t\times \pi_{-i}^{t,\tilde{l}_j}}\left(r_{i}^t + \mathbb{P}_t\overline{V}_{t+1,i}^{\tilde{l}_j}\right)(s^t)\notag \\ &- \frac{1}{\tilde{c}} \sum_{j=1}^{\tilde{c}} \left(r_{i}^t(s^t,\bm{a}^{t,\tilde{l}_j}) + \overline{V}_{t+1,i}^{\tilde{l}_j}(s^{t+1,\tilde{l}_j}) \right) \leq 4\sqrt{T^2 A_i\iota/\tilde{c}}, \notag
    \end{align}
	 where $\iota = \log(4NS\amax KT/p)$.
\end{lemma}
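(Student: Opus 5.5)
Fix an agent $i$, a step $t$, a state $s^t$, and a completed stage at $(t,s^t)$ containing $\tilde{c}$ visits in episodes $\tilde{l}_1<\dots<\tilde{l}_{\tilde{c}}$. The plan is to split the left-hand side into a regret term and a martingale term. For $j=1,\dots,\tilde c$, define the agent-$i$ loss of action $a$ by
\[
\ell_j(a):=\frac{1}{T}\Bigl[T-t+1-\mathbb{D}_{\pi_{-i}^{t,\tilde{l}_j}}\bigl(r_i^t+\mathbb{P}_t\overline{V}_{t+1,i}^{\tilde{l}_j}\bigr)(s^t,a,\cdot)\Bigr]\in[0,1].
\]
Here $\overline V_{t+1,i}\le T-t$ and $r_i^t\in[0,1]$. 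Then
\[
\max_{\pi_i^t}\frac1{\tilde c}\sum_j\mathbb{D}_{\pi_i^t\times\pi_{-i}^{t,\tilde l_j}}(\cdot)=\frac{T}{\tilde c}\Bigl[\tilde c(T-t+1)/T-\min_a\sum_j\ell_j(a)\Bigr],
\]
because the maximum over mixed $\pi_i^t$ of a linear function is attained at a pure action. The left-hand side therefore equals $\frac{T}{\tilde c}\bigl(\sum_j\langle\pi_i^{t,\tilde l_j},\ell_j\rangle-\min_a\sum_j\ell_j(a)\bigr)$ plus the difference between expected and realized payoffs, $\frac1{\tilde c}\sum_j\bigl[\mathbb{D}_{\bm\pi^{t,\tilde l_j}}(r_i^t+\mathbb{P}_t\overline V^{\tilde l_j}_{t+1,i})(s^t)-(r_i^t(s^t,\bm a^{t,\tilde l_j})+\overline V^{\tilde l_j}_{t+1,i}(s^{t+1,\tilde l_j}))\bigr]$.

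\emph{Regret term.} Lines~\ref{line:10}--\ref{line:11} are exactly Tsallis-INF with $\alpha=1/2$ and learning rate $\eta=2/\sqrt{\tilde c}$. They are fed the importance-weighted estimates of the realized loss $[T-t+1-(r_i^t+\overline V_{t+1,i}(s^{t+1}))]/T$, and these estimates are unbiased for $\ell_j(a)$ conditional on the history. The loss sequence is adversarial but adapted: the other agents' policies and $\overline V^{\tilde l_j}_{t+1,i}$ are determined before agent $i$ samples $a_i^{t,\tilde l_j}$. This is precisely the setting of adaptive adversarial bandits. Because the policy is reset at the start of each stage, the bandit instance is restarted there, so I apply the Tsallis-INF guarantee over the $\tilde c$ rounds of this stage.

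The difficulty, and what I expect to be the main obstacle, is that Tsallis-INF gives an expected-regret bound of order $\sqrt{A_i\tilde c}$, whereas the lemma needs a high-probability bound. To get it, I write the pseudo-regret against the realized estimates and control the deviation between the estimated cumulative loss $\sum_j\hat\ell_j(a)$ and $\sum_j\ell_j(a)$ for each fixed $a$. The terms $\hat\ell_j(a)-\ell_j(a)$ form a martingale difference sequence with conditional variance at most $1/\pi_i^{t,\tilde l_j}(a)$. The Tsallis-INF stability term is $\sum_j\eta_j\sum_a\sqrt{\pi_j(a)}\,\hat\ell_j(a)^2\pi_j(a)$, which is of order $\sqrt{A_i\tilde c}$ in expectation. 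I would then use Freedman's inequality together with the implicit-exploration style analysis (as in \cite{jin2021v,mao2022improving}, whose adversarial-bandit lemma I would mimic) to show that, with probability $1-p'$, the regret is at most $c\sqrt{A_i\tilde c\,\iota}$ for a suitable constant, say $c=3$. Multiplying by $T/\tilde c$ gives $3\sqrt{T^2A_i\iota/\tilde c}$.

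\emph{Martingale term.} Each summand is bounded by $T$ in absolute value and has zero mean conditional on the history up to episode $\tilde l_j$ before step $t$. Azuma--Hoeffding then yields $\sqrt{2T^2\iota/\tilde c}\le\sqrt{T^2A_i\iota/\tilde c}$ up to constants. One subtlety is that $\tilde c$ and the $\tilde l_j$ are random stopping times. I handle this by indexing by the visit count and taking a union bound over all possible $\tilde c\le K$, and also over $i,t,s^t,k$; this gives the factor $NSKT$ inside $\iota$, and the $A_{\max}$ factor comes from the per-action union in the regret term. Allocating $p/4$ to each of the two terms and summing yields the constant $4$ with probability $1-p/2$.
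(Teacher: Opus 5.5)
Your decomposition (agent $i$'s best-in-hindsight bandit regret plus a realized-versus-expected martingale term, Azuma--Hoeffding for the latter, Tsallis-INF for the former, then a union bound) is exactly the paper's. Your treatment of the martingale term and of the random stage boundaries is fine. The gap is the step you flag yourself, and the fix you propose does not work for this algorithm.

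Implicit exploration is not an analysis you can superimpose on Lines~\ref{line:10}--\ref{line:11}. It is a different estimator, $\hat\ell_j(a)=\ell_j^{\mathrm{obs}}\,\mathbb{I}[a_i^{t,\tilde l_j}=a]/(\pi_j(a)+\gamma)$. Its high-probability guarantee (Neu's bound $\sum_j(\hat\ell_j(a)-\ell_j(a))\le\log(A_i/\delta)/(2\gamma)$ for all $a$ simultaneously) comes entirely from the $+\gamma$ in the denominator. Algorithm~\ref{alg:sbv} divides by $\pi_i^t(a_i^t\mid s^t)$ alone.

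In the FTRL decomposition, the term you must control is $\sum_j(\hat\ell_j(a^\star)-\ell_j(a^\star))$ for the realized best action $a^\star$. Its increments can be as large as $1/\pi_j(a^\star)$, and its predictable variance is $\sum_j 1/\pi_j(a^\star)$. Tsallis-INF with $\eta_j=2/\sqrt{j}$ does not keep $\pi_j(a^\star)$ away from zero: $\pi_j(a)\approx 4/(\eta_j\Delta_j(a))^2$, where $\Delta_j(a)$ is the estimated cumulative gap to the leader. So an action that is bad during the first half of the stage but best overall can have $\sum_j 1/\pi_j(a)$ of order $\tilde c^{2}$. Freedman then only gives a deviation of order $\tilde c\sqrt{\iota}$, not $\sqrt{A_i\tilde c\,\iota}$. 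This heavy tail of unbiased importance weighting is exactly why implicit exploration was introduced. The V-learning bandit lemma you intend to mimic holds only because V-learning uses the modified estimator.

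The paper's own proof does not supply this step either. In Step~2 it inserts Zimmert--Seldin's bound $4\sqrt{A_i\tilde c}+1$, which controls the expected pseudo-regret $\max_a\mathbb{E}[\cdot]$, directly into a high-probability statement about the random quantity $\frac{T}{\tilde c}\bigl[\sum_j\langle\pi_i^{t,\tilde l_j},\ell_j\rangle-\min_a\sum_j\ell_j(a)\bigr]$. The $\sqrt{\iota}$ in $3\sqrt{T^2A_i\iota/\tilde c}$ is slack, not the output of a concentration argument. So your proposal is more candid than the paper's proof, but neither establishes the lemma for the algorithm as written.

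A repair has to change Line~\ref{line:10} and re-derive the constant. Options include an implicit-exploration denominator $\pi_i^t(a_i^t\mid s^t)+\gamma_{\tilde c}$, or another bandit routine with a genuine high-probability guarantee. An in-expectation version of the lemma would not suffice, because Lemma~\ref{lemma:upperbound} applies it to the realized estimates $\hat V_{t,i}$ to certify optimism.
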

\emph{Proof:}
We proceed the proof in three key steps.

\textbf{Step 1: Sample average error bound via martingale differences.}
	For a fixed agent $i\in \mathcal{N}$, episode$ k \in [K],$  step $t \in \mathcal{T},$  and state $ s^t \in \mathcal{S}^t$, let $\mathcal{F}_j$ be the $\sigma$-algebra generated by all the random variables up to episode $\tilde{l}_j$. Then, $\left\{ r_{i}^t(s^t,\bm{a}_{i}^{t,\tilde{l}_j}) + \overline{V}_{t+1,i}^{\tilde{l}_j} (s^{t+1,\tilde{l}_j}) - \mathbb{D}_{\bm{\pi}^{t,\tilde{l}_j}}\left(r_{i}^t + \mathbb{P}_t\overline{V}_{t+1,i}^{\tilde{l}_j}\right) (s^t) \right\}_{j=1}^{\tilde{c}}$ is a martingale difference sequence with respect to $\{\mathcal{F}_j\}_{j=1}^{\tilde{c}}$. By the boundedness of rewards and value functions, this sequence is bounded by $H$. Applying the Azuma-Hoeffding inequality, with probability at least $1-p/(4NSTK)$, we have
\begin{align}\label{eqn:martingale difference}
	&\frac{1}{\tilde{c}} \sum_{j=1}^{\tilde{c}} \mathbb{D}_{\bm{\pi}^{t,\tilde{l}_j}}\left(r_{i}^t + \mathbb{P}_t\overline{V}_{t+1,i}^{\tilde{l}_j}\right)(s^t) \notag \\ &\quad\quad- \frac{1}{\tilde{c}} \sum_{j=1}^{\tilde{c}} \left(r_{i}^t(s^t,\bm{a}^{t,\tilde{l}_j}) + \overline{V}_{t+1,i}^{\tilde{l}_j}(s^{t+1,\tilde{l}_j}) \right) \notag \\ &\quad\quad \leq  \sqrt{2T^2\log(8NSKT/p)/\tilde{c}} \notag \\ &\quad\quad \leq \sqrt{2T^2\iota/\tilde{c}}  \leq  \sqrt{T^2 A_i\iota/\tilde{c}}. 
\end{align}

\textbf{Step 2: Regret bound via Tsallis-INF.}
	Then, we only need to bound 
	\begin{align}\label{eqn:bandit_regret}
	R_{\tilde{c}}^\star :=& \max_{\pi_{i}^t} \frac{1}{\tilde{c}} \sum_{j=1}^{\tilde{c}} \mathbb{D}_{\pi_{i}^t\times \pi_{-i}^{t,\tilde{l}_j}}\left(r_{i}^t + \mathbb{P}_t\overline{V}_{t+1,i}^{\tilde{l}_j}\right)(s^t)\notag \\& - \frac{1}{\tilde{c}} \sum_{j=1}^{\tilde{c}} \mathbb{D}_{\bm{\pi}^{t,\tilde{l}_j}}\left( r_{i}^t + \mathbb{P}_t\overline{V}_{t+1,i}^{\tilde{l}_j}\right)(s^t).
	\end{align}
	Notice that $R_{\tilde{c}}^\star$ can be considered as the averaged regret of visiting the state $s^t$ with respect to the optimal policy in hindsight. Such a regret minimization problem can be handled by an adversarial multi-armed bandit problem, where the loss function at step $j\in[\tilde{c}]$ is defined as
	\begin{align*}
\ell_j(a_i^t) &= \mathbb{E}_{{a}_{-i}^t\sim \pi_{-i}^{t,\tilde{l}_j}(s^t)} 
\left[ T-t+1 - r_{i}^t(s^t,\bm{a}^t)  \right. \\
&\quad \left. -\mathbb{P}_t\overline{V}_{t+1,i}^{\tilde{l}_j}(s^t,\bm{a}^t) \right]/T.
\end{align*}
	Algorithm~\ref{alg:sbv} adopts the Tsallis-INF algorithm \cite{zimmert2021tsallis}, which guarantees for all $k \in [K]$:
\begin{align}\label{eqn:no-regret}
	R_{\tilde{c}}^\star &= \max_{a_{i}^t}\frac{T}{\tilde{c}}\sum_{j=1}^{\tilde{c}}(-\ell_j(a_i))- \frac{T}{\tilde{c}}\sum_{j=1}^{\tilde{c}} \mathbb{E}_{{a}_{i}^t\sim \pi_{i}^{t,\tilde{l}_j}}(s^t) [-\ell_j(a_i^t)] \notag \\&=\frac{T}{\tilde{c}}\big[\sum_{j=1}^{\tilde{c}} \mathbb{E}_{{a}_{i}^t\sim \pi_{i}^{t,\tilde{l}_j}}(s^t) [\ell_j(a_i^t)]-\min_{a_{i}^t}\sum_{j=1}^{\tilde{c}} \ell_j(a_i^t)\big] \notag \\ 
    &\leq \frac{T}{\tilde{c}} (4\sqrt{A_i \tilde{c}}+1) =4\sqrt{\frac{ T^2 A_i }{\tilde{c}}} + \frac{T}{\tilde{c}} \notag \\ 
    &\leq  3\sqrt{T^2 A_i\iota/\tilde{c}} .  
\end{align}

\textbf{Step 3: Union bound.}
Combining \eqref{eqn:martingale difference} and
\eqref{eqn:no-regret}, and taking a union bound over all agents
$i \in \mathcal{N}$, episodes $k \in [K]$, steps $t \in \mathcal{T}$, and states
$s^t \in \mathcal{S}^t$, completes the proof.
\hfill \ensuremath{\blacksquare}

\begin{algorithm}[!t]
\caption{Construction of the Correlated Policy $\bar{\bm{\pi}}^{t,k}$}\label{alg:certifyhk}
\begin{algorithmic}[1]
    \STATE \textbf{Input:} The distribution trajectory $\{\pi_{i}^{t,k}:i\in\mathcal{N},t\in \mathcal{T},k\in[K]\}$ specified by Algorithm~\ref{alg:sbv}.
    
    \STATE \textbf{Initialize:} $k'\gets k$.
    
    \FOR{step $t'\gets t$ to $T$}
        \STATE Receive $s^{t'}$;
        \STATE Uniformly sample $j$ from $\{1,2,\dots, \tilde{c}^{t', k'}(s^{t'})\}$;
        \STATE Set $k' \gets \tilde{l}_{j}^{t',k'}$; \COMMENT{$\tilde{l}_{j}^{t', k'}$ is the index of the episode that this state was visited the $j$-th time (among the total $\tilde{c}^{t',k'}$ times) in the last stage}
        \STATE Take joint action $\bm{a}^{t'}\sim \times_{i=1}^N\pi_{i}^{t', k'}(\cdot \mid s^{t'})$;
    \ENDFOR
\end{algorithmic}
\end{algorithm}

Based on the trajectory of the distributions $\{\pi_{i}^{t,k}:i\in\mathcal{N},t\in \mathcal{T},k\in[K]\}$ specified by Algorithm~\ref{alg:sbv}, we construct a correlated policy $\bar{\bm\pi}^{t,k}$ for each $(t,k)\in \mathcal{T}\times [K]$. Our construction of the correlated policies, largely inspired by the ``certified policies'' \cite{bai2020near} for learning in two-player zero-sum games, is formally presented  in Algorithm~\ref{alg:certifyhk}. We further define an output policy $\bar{\bm\pi}$: first uniformly sample an index $k$ from $[K]$, then execute the  policy $\bar{\bm\pi}^{1,k}$. A formal description of $\bar{\bm\pi}$ is given in Algorithm~\ref{alg:certify}. By construction of the correlated policies $\bar{\bm\pi}^{t,k}$, we know that for any $i \in \mathcal{N},$ episode $ k \in [K],$  step $t \in [T+1],$  and state $ s^t \in \mathcal{S}^t$, the corresponding value function can be written recursively as follows:
\[
V_{t,i}^{\bar{\bm\pi}^{t,k}}(s^t) = \frac{1}{\tilde{c}}\sum_{j=1}^{\tilde{c}} \mathbb{D}_{\bm\pi^{t,\tilde{l}_j}}\bigg(r_{i}^t+\mathbb{P}_{t} V_{t+1,i}^{\bar{\bm\pi}^{t+1,\tilde{l}_j}}\bigg)(s^t), 
\]
and $V_{t,i}^{\bar{\bm\pi}^{t,k}}(s^t) = 0$ if $t=T+1$ or $k$ is in the first stage of the corresponding $(t,s^t)$ pair. We also immediately obtain that
\[
V_{1,i}^{\bar{\bm\pi}}(s^1) = \frac{1}{K}\sum_{k=1}^K V_{1,i}^{\bar{\bm\pi}^{1,k}}(s^1).
\]

For analytical purposes, we introduce two auxiliary notations $\underline{V}$ and  $\underline{\hat{V}}$ as lower confidence bounds of the value estimates. Specifically, for any $i \in \mathcal{N},$ episode $ k \in [K],$  step $t \in [T+1],$  and state $ s^t \in \mathcal{S}^t$, we define $\underline{V}_{t,i}^k(s^t) = \underline{\hat{V}}_{t,i}^k(s^t) = 0$ if $t = T+1$ or $k$ is in the first stage of the $(h,s^t)$ pair, and 
\begin{align*}
\underline{\hat{V}}_{t,i}^k(s^t) &=\frac{1}{\tilde{c}}\sum_{j=1}^{\tilde{c}} \left(r_{i}^t(s^t,\bm{a}^{t,\tilde{l}_j}) + \underline{V}_{t+1,i}^{\tilde{l}_j}(s^{t+1,\tilde{l}_j}) \right)- b_{\tilde{c}},\\ &\text{ and } \underline{V}_{t,i}^k(s^t) = \max\left\{\underline{\hat{V}}_{t,i}^k(s^t), 0\right\}. 
\end{align*}
Note that these notations are only for analysis and agents do not need to maintain them explicitly during learning. Further, recall that $V_{t,i}^{\star,\bar{\pi}_{-i}^{t,k}}(s^t)$ is agent $i$'s best response value against its opponents' policy $\bar{\pi}_{-i}^{t,k}$. Our next lemma shows that $\overline{V}_{t,i}^k(s^t)$ and $\underline{V}_{t,i}^k(s^t)$ are indeed valid upper and lower bounds of $V_{t,i}^{\star,\bar{\pi}_{-i}^{t,k} }(s^t)$ and $V_{t,i}^{\bar{\bm\pi}^{t,k}}(s^t)$, respectively. 

\begin{lemma}\label{lemma:upperbound}
	It holds with probability at least $1-p$ that for all  for all agents $i \in \mathcal{N},$ episodes $k \in [K],$  steps $t \in \mathcal{T},$  and states $ s^t \in \mathcal{S}^t$ ,
\begin{align}
	 V_{t,i}^{\star,\bar{\pi}_{-i}^{t,k}}(s^t) \leq \overline{V}_{t,i}^k(s^t), \label{first inequality}\\
      V_{t,i}^{\bar{\bm\pi}^{t,k}}(s^t) \geq \underline{V}_{t,i}^k(s^t) . \label{second inequality}
\end{align}
\end{lemma}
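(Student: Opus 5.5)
We argue by backward induction on the step $t$, from $t=T+1$ down to $t=1$, working on the event of Lemma~\ref{lemma:bonus}; this event has probability at least $1-\frac{p}{2}\ge 1-p$. At $t=T+1$ both sides of \eqref{first inequality} and \eqref{second inequality} vanish. Now take any $t\le T$. Suppose $k$ lies in the first stage of the pair $(t,s^t)$. Then $\overline{V}_{t,i}^k(s^t)=T-t+1$, which upper-bounds any value, and $\underline{V}_{t,i}^k(s^t)=0$, which lower-bounds any value since rewards are nonnegative. So only the case where a previous stage exists needs work. In that case $\overline{V}_{t,i}^k(s^t)$ and $\underline{V}_{t,i}^k(s^t)$ are computed from the $\tilde{c}$ visits $\tilde{l}_1,\dots,\tilde{l}_{\tilde{c}}$ of the previous stage, and those episodes come strictly before $k$. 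The induction hypothesis at step $t+1$ therefore applies to every pair $(t+1,\tilde{l}_j)$ and every state $s^{t+1}$.

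\textbf{Upper bound \eqref{first inequality}.} By construction (Algorithm~\ref{alg:certifyhk}), the opponents under $\bar{\pi}^{t,k}_{-i}$ first draw $j$ uniformly from $[\tilde{c}]$. They then play $\pi_{-i}^{t,\tilde{l}_j}$ at step $t$ and continue with $\bar{\pi}_{-i}^{t+1,\tilde{l}_j}$. Agent $i$'s best response is not Markov, but it may depend only on the history, not on the hidden index $j$. Hence it commits to a single step-$t$ action distribution $\pi_i^t$ before $j$ is revealed, and after that it can do no better than the best response against each continuation. This gives
\[
V_{t,i}^{\star,\bar{\pi}_{-i}^{t,k}}(s^t)\le \max_{\pi_i^t}\frac{1}{\tilde{c}}\sum_{j=1}^{\tilde{c}}\mathbb{D}_{\pi_i^t\times\pi_{-i}^{t,\tilde{l}_j}}\Big(r_i^t+\mathbb{P}_t V_{t+1,i}^{\star,\bar{\pi}_{-i}^{t+1,\tilde{l}_j}}\Big)(s^t).
\]
By the induction hypothesis we can replace $V^{\star}$ on the right by $\overline{V}_{t+1,i}^{\tilde{l}_j}$, using monotonicity of $\mathbb{P}_t$ and $\mathbb{D}$. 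Lemma~\ref{lemma:bonus} then bounds the result by the empirical average plus $4\sqrt{T^2A_i\iota/\tilde{c}}=b_{\tilde{c}}$, and by \eqref{eqn:update_rule_new} this equals $\hat{V}_{t,i}^k(s^t)$. Since the value is also at most $T-t+1$, taking the minimum gives $\overline{V}_{t,i}^k(s^t)$.

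\textbf{Lower bound \eqref{second inequality}.} We start from the recursion
\[
V_{t,i}^{\bar{\bm\pi}^{t,k}}(s^t)=\frac{1}{\tilde{c}}\sum_j\mathbb{D}_{\bm\pi^{t,\tilde{l}_j}}\big(r_i^t+\mathbb{P}_tV_{t+1,i}^{\bar{\bm\pi}^{t+1,\tilde{l}_j}}\big)(s^t).
\]
By the induction hypothesis this is at least $\frac{1}{\tilde{c}}\sum_j\mathbb{D}_{\bm\pi^{t,\tilde{l}_j}}(r_i^t+\mathbb{P}_t\underline{V}_{t+1,i}^{\tilde{l}_j})(s^t)$. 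Here we need a lower-tail concentration bound. The sequence $r_i^t(s^t,\bm a^{t,\tilde{l}_j})+\underline{V}_{t+1,i}^{\tilde{l}_j}(s^{t+1,\tilde{l}_j})-\mathbb{D}_{\bm\pi^{t,\tilde{l}_j}}(r_i^t+\mathbb{P}_t\underline{V}_{t+1,i}^{\tilde{l}_j})(s^t)$ is a martingale difference sequence bounded by $T$. We apply Azuma–Hoeffding exactly as in Step~1 of Lemma~\ref{lemma:bonus}, with failure probability $p/(4NSTK)$ per tuple. A union bound over agents, episodes, steps and states costs another $p/2$, which together with Lemma~\ref{lemma:bonus} gives total probability $1-p$. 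On this event the expected average exceeds the empirical average minus $\sqrt{2T^2\iota/\tilde{c}}\le b_{\tilde{c}}$. So $V_{t,i}^{\bar{\bm\pi}^{t,k}}(s^t)\ge\underline{\hat V}_{t,i}^k(s^t)$. Nonnegativity gives $V_{t,i}^{\bar{\bm\pi}^{t,k}}(s^t)\ge 0$ as well, and hence $V_{t,i}^{\bar{\bm\pi}^{t,k}}(s^t)\ge\underline{V}_{t,i}^k(s^t)$.

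\textbf{Main obstacle.} The delicate point is the first display: justifying that the best response against the correlated, history-dependent policy $\bar{\pi}_{-i}^{t,k}$ splits into a single step-$t$ mixed action followed by per-$j$ continuation best responses. The key facts are that the sampled index $j$ is hidden from agent $i$, so the $\max$ sits outside the average over $j$, and that the continuation value after observing $(a^t,s^{t+1})$ is bounded by the maximum over continuations. The continuation policy $\bar{\pi}^{t+1,\tilde{l}_j}$ depends on $j$. I would make this rigorous by conditioning on the observed history and using that the posterior over $j$ is a mixture. A mixture of continuation best-response values is dominated by the mixture of the individual maxima, and this yields exactly the right-hand side with the $\max_{\pi_i^t}$ outside.
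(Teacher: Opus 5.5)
Your proof is correct and follows the paper's argument. For \eqref{first inequality} you use the same chain: best-response decomposition, then the induction hypothesis, then Lemma~\ref{lemma:bonus}, then the definition of $b_{\tilde{c}}$. For \eqref{second inequality} you use the same recursion followed by Azuma--Hoeffding.

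The differences are minor. You induct backward on $t$ instead of forward on $k$, which removes the need for the paper's ``Case 2'', because $\overline{V}^k_{t,i}$, $\underline{V}^k_{t,i}$ and $\bar{\bm\pi}^{t,k}$ are all built from the same previous-stage visits. You also justify the first displayed inequality, via the hidden index $j$ and a mixture of continuation best responses, which the paper only asserts.
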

\emph{Proof:}
	Fix an agent $i\in \mathcal{N}$, episode$ k \in [K],$  step $t \in \mathcal{T},$  and state $ s^t \in \mathcal{S}^t$. The desired result clearly holds for any state $s^t$ that is in its first stage, due to our initialization of $\overline{V}_{t,i}^k(s^t)$ and  $\underline{V}_{t,i}^k(s^t)$ for this special case. In the following, we only need to focus on the case where $\overline{V}^k_{t,i}(s^t)$ and  $\underline{V}_{t,i}^k(s^t)$ have been updated at least once at the given state $s^t$ before the $k$-th episode. 

	We first prove the first inequality \eqref{first inequality}. It suffices to show that $\hat{V}_{t,i}^k(s^t)\geq V_{t,i}^{\star,\bar{\pi}_{-i}^{t,k}}(s^t)$, since $\overline{V}_{t,i}^k(s^t)=\min\{\hat{V}_{t,i}^k(s^t), T-t+1\}$, and $V_{t,i}^{\star,\bar{\pi}_{-i}^{t,k}}(s^t)$ is always less than or equal to $T-t+1$. Our proof relies on induction on $k\in [K]$. The base case $k=1$ holds by initialization logic. For the inductive step, consider two cases for $t \in \mathcal{T}$ and $s^t \in \mathcal{S}^t$:
	
	\textbf{Case 1 (for \eqref{first inequality}):} $\hat{V}_{t,i}(s^t)$ has just been updated in (the end of) episode $k-1$. In this case, by definition of stage-based updates:
	\begin{equation}
	\hat{V}^k_{t,i}(s^t) = \frac{1}{\tilde{c}}\sum_{j=1}^{\tilde{c}} \left( r_{i}^t(s^t,\bm{a}^{t,\tilde{l}_j}) + \overline{V}_{t+1,i}^{\tilde{l}_j}(s^{t+1,\tilde{l}_j}) \right) + b_{\tilde{c}} .
	\end{equation}
	And by the definition of $V_t^{\star, \bar{\pi}_{-i}^{t,k}}(s^t)$, it holds with probability at least $1-\frac{p}{4NSKT}$ that
	\begin{align}
	V_{t,i}^{\star,\bar{\pi}_{-i}^{t,k}}(s^t)&\leq  \max_{\pi_{i}^t} \frac{1}{\tilde{c}}\sum_{j=1}^{\tilde{c}} \mathbb{D}_{\pi_{i}^t\times \pi_{-i}^{t,\tilde{l}_{j}}} \left( r_{i}^t + \mathbb{P}_tV_{t+1,i}^{\star, \bar{\pi}_{-i}^{t+1,\tilde{l}_{j}}}\right)  (s^t)\nonumber\\
	&\leq\max_{\pi_{i}^t} \frac{1}{\tilde{c}}\sum_{j=1}^{\tilde{c}} \mathbb{D}_{\pi_{i}^t\times \pi_{-i}^{t,\tilde{l}_{j}}} \left( r_{i}^t + \mathbb{P}_t\overline{V}_{t+1,i}^{\tilde{l}_j}\right)  (s^t)\nonumber\\
	\leq  \frac{1}{\tilde{c}} \sum_{j=1}^{\tilde{c}}& \left( r_{i}^t(s^t,\bm{a}^{t,\tilde{l}_j}) + \overline{V}_{t+1,i}^{\tilde{l}_j}(s^{t+1,\tilde{l}_j}) \right) + 4\sqrt{T^2 A_i\iota/\tilde{c}}\nonumber\\
	&\leq  \hat{V}^k_{t,i}(s^t),
	\end{align}
	where the second step is by the induction hypothesis, the third step holds due to Lemma~\ref{lemma:bonus}, and the last step is by the definition of $b_{\tilde{c}}$. 
	
	\textbf{Case 2 (for \eqref{first inequality}): } $\hat{V}_{t,i}(s^t)$ was not updated in (the end of) episode $k-1$.  Since we have excluded the case that $\hat{V}_{t,i}$ has never been updated, we are guaranteed that there exists an episode $j$ such that $\hat{V}_{t,i}(s^t)$ has been updated in the end of episode $j-1$ most recently. In this case, $\hat{V}_{t,i}^k(s^t) = \hat{V}_{t,i}^{k-1}(s^t) = \dots = \hat{V}_{t,i}^{j}(s^t) \geq V_{t,i}^{\star,\bar{\pi}_{-i}^{t,j}}(s^t)$, where the last step is by the induction hypothesis. Finally, by the construction of stage-based policies (Algorithm~\ref{alg:sbv}), $\bar{\pi}_{-i}^{t,j}$ is constant within the same stage (i.e., unchanged for episodes in that stage), so $V_{t,i}^{\star,\bar{\pi}_{-i}^{t,j}}(s^t)$ is also constant for all episodes $j$ in the same stage. Since we know that episode $j$ and episode $k$ lie in the same stage, we can conclude that $V_{t,i}^{\star,\bar{\pi}_{-i}^{t,k}}(s^t) = V_{t,i}^{\star,\bar{\pi}_{-i}^{t,j}}(s^t) \leq \hat{V}_{t,i}^k(s^t)$. 
	
	Combining the two cases and applying a union bound over all agents $i \in \mathcal{N},$ episodes $k \in [K],$  steps $t \in \mathcal{T},$  and states $ s^t \in \mathcal{S}^t$, the first inequality holds with probability at least $1-\frac{p}{2}$. 
	
	Next, we prove the second inequality \eqref{second inequality} in the statement of the lemma. Notice that it suffices to show $\underline{\hat{V}}_{t,i}^k(s^t)\leq V_{t,i}^{\bar{\bm\pi}^{t,k}}(s^t)$ because $\underline{V}_{t,i}^k(s^t) = \max\{\underline{\hat{V}}_{t,i}^k(s^t),0\}$. Our proof again relies on induction on $k\in[K]$. Similar to the proof of the first inequality, the claim apparently holds for $k=1$, and we consider the following two cases for each step $t\in \mathcal{T}$ and $s^t \in\mathcal{S}^t$.
	
	\textbf{Case 1 (for \eqref{second inequality}):} The value of $\underline{\hat{V}}_{t,i}(s^t)$ has just changed in (the end of) episode $k-1$. In this case, 
	\begin{equation}
	\underline{\hat{V}}^k_{t,i}(s^t) = \frac{1}{\tilde{c}}\sum_{j=1}^{\tilde{c}} \left( r_{i}^t(s^t,\bm{a}^{t,\tilde{l}_j}) + \underline{\hat{V}}_{t+1,i}^{\tilde{l}_j}(s^{t+1,\tilde{l}_j}) \right)- b_{\tilde{c}} .
	\end{equation}
	By the definition of $V_{t,i}^{\bar{\bm\pi}^{t,k}}(s^t)$, it holds with probability at least $1-\frac{p}{4NSKT}$ that
	\begin{align}
	V_{t,i}^{\bar{\bm\pi}^{t,k}}(s^t)=  &\frac{1}{\tilde{c}}\sum_{j=1}^{\tilde{c}} \mathbb{D}_{\bm{\pi}^{t,\tilde{l}_j}} \left( r_{i}^t+\mathbb{P}_{t} V_{t+1,i}^{\bar{\bm\pi}^{t+1, \tilde{l}_j}}\right)(s^t)\nonumber\\
	\geq&\frac{1}{\tilde{c}}\sum_{j=1}^{\tilde{c}} \mathbb{D}_{\bm{\pi}^{t,\tilde{l}_j}} \left( r_{i}^t+\mathbb{P}_{t} \underline{\hat{V}}_{t+1,i}^{\tilde{l}_j}\right)(s^t)\nonumber\\
	\geq  \frac{1}{\tilde{c}} \sum_{j=1}^{\tilde{c}} & \left( r_{i}^t(s^t,\bm{a}^{t,\tilde{l}_j}) +\underline{\hat{V}}_{t+1,i}^{\tilde{l}_j}(s^{t+1,\tilde{l}_j}) \right) - \sqrt{2T^2 \iota/\tilde{c}}\nonumber\\
	\geq & \underline{\hat{V}}^k_{t,i}(s^t),
	\end{align}
	where the second step is by the induction hypothesis, the third step holds due to the Azuma-Hoeffding inequality, and the last step is by the definition of $b_{\tilde{c}}$. 
	
	\textbf{Case 2 (for \eqref{second inequality}): } The value of $\underline{\hat{V}}_{t,i}(s^t)$ has not changed in (the end of) episode $k-1$.  Since we have excluded the case that $\underline{\hat{V}}_{t,i}$ has never been updated, we are guaranteed that there exists an episode $j$ such that $\underline{\hat{V}}_{t,i}(s^t)$ has changed in the end of episode $j-1$ most recently. In this case, we know that indices $j$ and $k$ belong to the same stage, and $\underline{\hat{V}}_{t,i}^k(s^t) = \underline{\hat{V}}_{t,i}^{k-1}(s^t) = \dots = \underline{\hat{V}}_{t,i}^{j}(s^t) \leq V_{t,i}^{\bar{\bm\pi}^{t,j}}(s^t)$, where the last step is by the induction hypothesis. Finally, by stage-based policy construction (Algorithm~\ref{alg:sbv}), $\bar{\bm\pi}^{t,j}$ is constant within the same stage, so $V_{t,i}^{\bar{\bm\pi}^{t,j}}(s^t)$ is constant for all episodes $j$ in that stage. Since $j$ and $k$ lie in the same stage, we can conclude  that $V_{t,i}^{\bar{\bm\pi}^{t,k}}(s^t)= V_{t,i}^{\bar{\bm\pi}^{t,j}}(s^t) \geq \underline{\hat{V}}_{t,i}^k(s^t)$. 
	
	Again, combining the two cases and applying a union bound over all agents
$i \in \mathcal{N}$, episodes $k \in [K]$, steps $t \in \mathcal{T}$, and states
$s^t \in \mathcal{S}^t$ , the second inequality holds with probability at least $1-\frac{p}{2}$.

By the union bound over both inequalities, the lemma holds with probability at least $1-p$. 
\hfill \ensuremath{\blacksquare}

The following result shows that the agents have no incentive to deviate from the correlated policy $\bar{\bm\pi}$, up to a regret term of the order $\widetilde{O}(\sqrt{T^5 S \amax /K})$. 

\begin{theorem}\label{thm:main}
	For any $p\in(0,1]$, let $\iota = \log(2NS\amax KT/p)$. Suppose $K \geq \frac{ST}{\amax\iota}$, with probability at least $1-p$, the following holds for any initial state $s^1\in \mathcal{S}^1$ and agent $i \in \mathcal{N}$:
	$$
	V_{1,i}^{\star,\bar{\bm\pi}_{-i}}(s^1) - V_{1,i}^{\bar{\bm\pi}}(s^1)\leq O \left(\sqrt{T^5 SA_{\max} \iota/K}\right).
	$$   
\end{theorem}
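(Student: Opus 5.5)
The plan follows the stage-based V-learning analysis. On the event of Lemma~\ref{lemma:upperbound}, which holds with probability at least $1-p$, I will first reduce the equilibrium gap to the width of the confidence interval. Since $\bar{\bm\pi}$ samples $k$ uniformly and then runs $\bar{\bm\pi}^{1,k}$, we have $V_{1,i}^{\bar{\bm\pi}}(s^1)=\frac1K\sum_k V_{1,i}^{\bar{\bm\pi}^{1,k}}(s^1)$. For the best response against a mixture, $V_{1,i}^{\star,\bar{\bm\pi}_{-i}}(s^1)\le \frac1K\sum_k V_{1,i}^{\star,\bar{\pi}_{-i}^{1,k}}(s^1)$, because a single deviation can do no better than the per-$k$ best responses. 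Lemma~\ref{lemma:upperbound} then gives
\[
V_{1,i}^{\star,\bar{\bm\pi}_{-i}}(s^1)-V_{1,i}^{\bar{\bm\pi}}(s^1)\le \frac1K\sum_{k=1}^K\big(\overline{V}_{1,i}^k-\underline{V}_{1,i}^k\big)(s^1)=\frac1K\sum_{k=1}^K\delta^{1,k},
\]
with $\delta^{t,k}:=\overline{V}_{t,i}^k(s^{t,k})-\underline{V}_{t,i}^k(s^{t,k})$. It therefore suffices to show $\sum_k\delta^{1,k}\le O(ST^3+\sqrt{SA_{\max}KT^5\iota})$. The condition $K\ge ST/(A_{\max}\iota)$ is what lets the $ST^3$ term be absorbed into the square-root term.

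\textbf{Recursion.} If $(t,s^{t,k})$ is in its first stage, I bound $\delta^{t,k}\le T$. Otherwise, subtracting the update rule \eqref{eqn:update_rule_new} from the definition of $\underline{\hat V}$ and using that the clipping $\min\{\cdot,T-t+1\}$ and $\max\{\cdot,0\}$ only shrinks the gap gives
\[
\delta^{t,k}\le T\,\mathbb I[\text{first stage}]+\frac1{\tilde c}\sum_{j=1}^{\tilde c}\big(\overline V_{t+1,i}^{\tilde l_j}-\underline V_{t+1,i}^{\tilde l_j}\big)(s^{t+1,\tilde l_j})+2b_{\tilde c}.
\]
The sampled gaps $\overline V-\underline V$ at $s^{t+1,\tilde l_j}$ are exactly $\delta^{t+1,\tilde l_j}$, since they are evaluated at the visited next states.

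\textbf{Summing over $k$.} Summing the recursion over $k$, I need to count how often each $\delta^{t+1,k'}$ appears on the right-hand side. Episode $k'$ belongs to one stage of $(t,s^{t,k'})$, and it is reused only by the episodes visiting that state during the \emph{next} stage, each time with weight $1/\tilde c$. The total weight is therefore (next stage length)/(current stage length). Under Algorithm~\ref{alg:sbv} this ratio is $\lfloor\lambda(1+\frac1T)\tilde L\rfloor/\tilde L\le 1+\frac1T$ because $\lambda\le1$. This yields
\[
\sum_k\delta^{t,k}\le ST^2+\Big(1+\tfrac1T\Big)\sum_k\delta^{t+1,k}+\sum_k O(b_{\tilde c^{t,k}}),
\]
where the first-stage contribution is at most $S\cdot T\cdot T$ per step. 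Unrolling over $T$ steps, the factor $(1+1/T)^T\le e$ stays bounded, so $\sum_k\delta^{1,k}\le O(ST^3)+O(1)\sum_t\sum_k b_{\tilde c^{t,k}}$.

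\textbf{Bonus sum.} For a fixed $(t,s)$, every episode in stage $m+1$ uses the bonus with $\tilde c=L_m$, the previous stage length, and stage $m+1$ has about $L_{m+1}\approx L_m$ episodes. The contribution of that stage is therefore $\lesssim L_{m+1}\sqrt{T^2A_i\iota/L_m}\approx\sqrt{T^2A_i\iota L_{m+1}}$. By Cauchy--Schwarz over stages and states, with the number of visits $n_s$ satisfying $\sum_s n_s\le K$, the total over all $k$ at a fixed $t$ is $O(\sqrt{T^2A_{\max}\iota SK})$. Summing over $t$ gives $O(\sqrt{SA_{\max}KT^4\iota})\cdot\sqrt T$ if I am loose, consistent with $\sqrt{SA_{\max}KT^5\iota}$. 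Dividing by $K$ gives the claim.

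\textbf{Main obstacle.} The hard part is the bonus-sum step, because the stage lengths here are adaptive. In the fixed schedule $\lceil(1+1/T)^m T\rceil$, the sum $\sum_m\sqrt{L_m}$ is controlled because the lengths grow geometrically. With $\lambda\in[\lambda_{\min},1]$ and $\lambda_{\min}>T/(T+1)$, the growth factor is $\lambda(1+1/T)>1$ but possibly only barely, and the floor could even stall growth. My first attempt will be to lower-bound the growth factor by $\lambda_{\min}(1+1/T)=1+\Theta(1/T)$, accepting constants that depend on $\lambda_{\min}$. I will also check that $\tilde L$ still increases at each stage so that the number of stages is $O(T\log K)$. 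The bound $\sum_m\sqrt{L_m}\lesssim\sqrt{T\sum_m L_m}$ then follows with at worst an extra $\sqrt{T}$, which fits the $T^5$ budget. The same lower bound on the growth factor is used for the $(1+1/T)$ reuse-weight bound.
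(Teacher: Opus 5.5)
Your outline follows the paper's proof step for step. You reduce the gap to $\frac{1}{K}\sum_k\delta^{1,k}$ via Lemma~\ref{lemma:upperbound}, derive the recursion for $\delta^{t,k}$, and bound the reuse weight by $1+\frac{1}{T}$ using $\lambda\le 1$. You then unroll with $(1+\frac{1}{T})^T\le e$ and control the bonus sum through geometric growth of stage lengths plus Cauchy--Schwarz. Your first-stage count of $S\cdot T\cdot T$ per step is more careful than the paper's.

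The genuine gap is at the step you flagged, and your proposed fix does not work. The check that $\tilde L$ increases at every stage is false because of the floor in Line~\ref{line:17}. Write $c=\lambda(T+1)-T\in(0,1]$. Then $\lfloor\lambda(1+\frac{1}{T})L\rfloor=L+\lfloor cL/T\rfloor$, which equals $L$ whenever $L<T/c$. At the initial length $L=T$, every $\lambda\in[\lambda_{\min},1)$ gives $\lambda(T+1)\in(T,T+1)$, so the new length is again exactly $T$.

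Hence if the fluctuation coefficient stays below $1$ (e.g.\ $CV<CV_{\max}$ in every stage), the following chain holds:
\begin{itemize}
\item all stages have length $T$;
\item $\tilde c^{t,k}=T$ and $b_{\tilde c}=4\sqrt{TA_i\iota}$ for every non-first-stage episode;
\item $\sum_k b_{\tilde c^{t,k}}$ is linear in $K$;
\item your recursion only bounds $\frac{1}{K}\sum_k\delta^{1,k}$ by order $T^{3/2}\sqrt{A_i\iota}$, with no decay in $K$.
\end{itemize}
Lower-bounding the real-valued factor by $\lambda_{\min}(1+\frac{1}{T})$ cannot help, because the floor erases the increment. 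Without genuine multiplicative growth, neither your $O(T\log K)$ stage count nor $\sum_m\sqrt{L_m}\lesssim\sqrt{T\sum_m L_m}$ is available.

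You cannot close this by following the paper. Its proof simply asserts that $\{e_{s^t,j}\}$ grows geometrically with ratio $\rho=1+c/T$, $c\in(0,1]$, ignoring the floor and absorbing $1/c$ into $\Theta(\cdot)$. The schedule has to be changed or constrained. Let $c_{\min}:=\lambda_{\min}(T+1)-T$. Two options work:
\begin{itemize}
\item Initialize $\tilde L\ge T/c_{\min}$. Then $\lfloor cL/T\rfloor\ge c_{\min}L/(2T)$, so $e_{j+1}\ge(1+\frac{c_{\min}}{2T})e_j$, and the reuse weight stays $\le 1+\frac{1}{T}$.
\item Replace $\lfloor\cdot\rfloor$ by $\lceil\cdot\rceil$. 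This gives $e_{j+1}\ge(1+\frac{c_{\min}}{T})e_j$, at the price of a reuse weight $1+\frac{1}{T}+\frac{1}{e_j}\le 1+\frac{2}{T}$.
\end{itemize}
The reuse bound rests on the \emph{upper} bound $\lambda\le 1$, not the lower one as your last sentence says.

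With a true ratio $1+\Theta(c_{\min}/T)$, sum the geometric series directly to get $\sum_j e_j^{1/2}=O(\sqrt{T\sum_j e_j/c_{\min}})$. Going through a stage count and Cauchy--Schwarz instead costs an extra $\sqrt{\log K}$. The bonus total at fixed $t$ is then $O(\sqrt{T^3A_{\max}\iota SK/c_{\min}})$, not the $O(\sqrt{T^2A_{\max}\iota SK})$ you first wrote. Summing over $t$ gives $\sqrt{SA_{\max}KT^5\iota}$ up to a $\lambda_{\min}$-dependent constant.
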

\emph{Proof:}
	We first recall the definitions of several notations and define a few new ones. For a state $s^{t,k}$, recall that $\tilde{c}^{t,k}$ denotes the number of visits to the state $s^{t,k}$ (at the $t$-th step)  in the stage right before the current stage, and $\tilde{l}_{j}^{t,k}$ denotes the $j$-th episode among the $\tilde{c}^{t,k}$ episodes. When $t$ and $k$ are clear from context, we abbreviate $\tilde{l}_{j}^{t,k}$ as $\tilde{l}_j$ and $\tilde{c}^{t,k}$ as $\tilde{c}$.
    
    By Lemma~\ref{lemma:upperbound} (upper/lower bound properties of $\overline{V}$ and $\underline{V}$) and the construction of the output policy $\bar{\bm\pi}$ in  Algorithm~\ref{alg:sbv}, we know that
	\[
	\begin{aligned}
	V_{1,i}^{\star,\bar{\bm\pi}_{-i}}(s^1) - V_{1,i}^{\bar{\bm\pi}}(s^1) \leq & \frac{1}{K}\sum_{k=1}^K  \left( V_{1,i}^{\star,\bar{\bm\pi}_{1,-i}^k}(s^1) - V_{1,i}^{\bar{\bm\pi}_1^k} (s^1) \right)\\
	\leq & \frac{1}{K}\sum_{k=1}^K  \left( \overline{V}_{1,i}^{k}(s^1) - \underline{V}_{1,i}^{k}(s^1)\right). 
	\end{aligned}
	\]
	Thus, it suffices to upper bound $\frac{1}{K}\sum_{k=1}^K ( \overline{V}_{1,i}^{k}(s^1) - \underline{V}_{1,i}^{k}(s^1))$. For a fixed agent $i\in\mathcal{N}$, we define the following notation:
	$$
	\delta^{t,k} := \overline{V}_{t,i}^{k}(s^{t,k})- \underline{V}_{t,i}^{k}(s^{t,k}).
	$$ 
	The key idea of the subsequent proof is to upper bound $\sum_{k=1}^K \delta^{t,k}$ by the next step $\sum_{k=1}^K \delta^{t+1,k}$, and then obtain a recursive formula. From the update rule of $\overline{V}_{t,i}^k(s^{t,k})$ in \eqref{eqn:update_rule_new}, we have:
    \begin{align*}
	\overline{V}_{t,i}^k(s^{t,k}) \leq & \mathbb{I}[c^{t,k} = 0]T  \\ & + \frac{1}{\tilde{c}}\sum_{j=1}^{\tilde{c}}  \left( r_{i}^t(s^t,\bm{a}^{t,\tilde{l}_j}) + \overline{V}_{t+1,i}^{\tilde{l}_j}(s^{t+1,\tilde{l}_j}) \right)+ b_{\tilde{c}},
	\end{align*}
	where the $\mathbb{I}[c^{t,k}=0]$ term counts for the event that the optimistic value function has never been updated for the given state.
	
	Combining this with the definition of $\underline{V}_{t,i}^{k}(s^{t,k})$, we have
	\begin{align}
	\delta^{t,k} \leq & \mathbb{I}[c^{t,k} = 0]T \notag \\
    &+ \frac{1}{\tilde{c}}\sum_{j=1}^{\tilde{c}}  \left( \overline{V}_{t+1,i}^{\tilde{l}_j}(s^{t+1,\tilde{l}_j}) - \underline{V}_{t+1,i}^{\tilde{l}_j}(s^{t+1,\tilde{l}_j}) \right)+2 b_{\tilde{c}}\nonumber\\
	\leq& \mathbb{I}[c^{t,k} = 0]T + \frac{1}{\tilde{c}}\sum_{j=1}^{\tilde{c}}\delta^{t+1,\tilde{l}_j} + 2b_{\tilde{c}},\label{eqn:delta}
	\end{align}
	To find an upper bound of $\sum_{k=1}^K \delta^{t,k}$, we proceed to upper bound each term on the RHS of \eqref{eqn:delta} separately:
    
\textbf{Term 1:} $\sum_{k=1}^K \mathbb{I}[c^{t,k}=0]T$.
Each fixed step-state pair $(t,s^t)$ contributes at most $1$ to $\sum_{k=1}^K \mathbb{I}\left[c^{t,k}=0\right]$. There are $ST$ such pairs (across $t\in\mathcal{T}$ and $s^t\in\mathcal{S}^t$), so $\sum_{k=1}^K \mathbb{I}[c^{t,k}=0]T \leq ST^2$.

\textbf{Term 2:} $\sum_{k=1}^K \frac{1}{\tilde{c}^{t,k}}\sum_{j=1}^{\tilde{c}^{t,k}} \delta^{t+1,\tilde{l}_{j}^{t,k}}$.
Switching the order of summation, we rewrite this term as:
	\begin{align}
	\sum_{k=1}^K \frac{1}{\tilde{c}^{t,k}}\sum_{j=1}^{\tilde{c}^{t,k}} \delta^{t+1,\tilde{l}_{j}^{t,k}} =& \sum_{k=1}^K \sum_{m=1}^K \frac{1}{\tilde{c}^{t,k}} \delta^{t+1,m} \sum_{j=1}^{\tilde{c}^{t,k}} \mathbb{I}\left[\tilde{l}_{j}^{t,k} = m\right]\nonumber\\
	=&\sum_{m=1}^K  \delta^{t+1,m} \sum_{k=1}^K  \frac{1}{\tilde{c}^{t,k}} \sum_{j=1}^{\tilde{c}^{t,k}} \mathbb{I}\left[\tilde{l}_{j}^{t,k} = m\right].\label{eqn:d1}
	\end{align}
	For a fixed episode $m$, notice that $\sum_{j=1}^{\tilde{c}^{t,k}} \mathbb{I}[\tilde{l}_{j}^{t,k} = m]\leq 1$, and that $\sum_{j=1}^{\tilde{c}^{t,k}} \mathbb{I}[\tilde{l}_{j}^{t,k} = m]= 1$ happens if and only if $s^{t,k} = s^{t,m}$ and $(t,m)$ lies in the previous stage of $(t,k)$ with respect to the step-state pair $(t,s^{t,k})$. 
    Define $\mathcal{K}_m:= \{ k\in[K]: \sum_{j=1}^{\tilde{c}^{t,k}} \mathbb{I}[\tilde{l}_{j}^{t,k} = m]= 1 \}$; this set consists of all episodes $k$ for which $m$ is a pre-stage episode of $(t,s^{t,k})$. Then we know that all episode indices $k\in \mathcal{K}_m$ belong to the same stage, and hence these episodes have the same value of $\tilde{c}^{t,k}$. That is, there  exists an integer $N_m>0$, such that $\tilde{c}^{t,k} = N_m,\forall k \in \mathcal{K}_m$. Further, since the stages are partitioned in a way such that each stage is at most $(1+\frac{1}{T})$ times longer than the previous stage (Line~\ref{line:17} of Algorithm~\ref{alg:sbv}), we know that $|\mathcal{K}_m|\leq (1+\frac{1}{T})N_m$. Therefore, for every $m$, it holds that
	\begin{equation}\label{eqn:d2}
	\sum_{k=1}^K  \frac{1}{\tilde{c}^{t,k}} \sum_{j=1}^{\tilde{c}^{t,k}} \mathbb{I}\left[\tilde{l}_{j}^{t,k} = m\right] \leq 1+ \frac{1}{T}.
	\end{equation} 
	Combining \eqref{eqn:d1} and \eqref{eqn:d2} leads to the following upper bound of the second term in \eqref{eqn:delta}:	
	\begin{equation}\label{eqn:d3}
	\sum_{k=1}^K \frac{1}{\tilde{c}^{t,k}}\sum_{j=1}^{\tilde{c}^{t,k}} \delta^{t+1,\tilde{l}_{t,j}^k}   \leq (1+\frac{1}{T}) \sum_{k=1}^{K} \delta^{t+1,k}.
	\end{equation}
	So far, we have obtained the following upper bound:
	\[
	\sum_{k=1}^K\delta^{t,k} \leq ST^2 + (1+\frac{1}{T})  \sum_{k=1}^K \delta^{t+1,k} + 2\sum_{k=1}^K b_{\tilde{c}^{t,k}}. 
	\]
	
	Iterating the above inequality over $t = T, T-1, \dots, 1$ leads to
	\begin{align}\label{eqn:tmp9}
	\sum_{k=1}^{K}\delta^{1,k} &\leq \left( \sum_{t=1}^T (1+\frac{1}{T})^{t-1} \right) ST^2+ 2\sum_{t=1}^T \sum_{k=1}^K (1+\frac{1}{T})^{t-1}b_{\tilde{c}^{t,k}} \notag \\ &=  \left((1+\frac{1}{T})^T-1 \right) ST^3 + 2\sum_{t=1}^T \sum_{k=1}^K (1+\frac{1}{T})^{t-1}b_{\tilde{c}^{t,k}}  \notag \\&\leq O\left( ST^3 + \sum_{t=1}^T \sum_{k=1}^K (1+\frac{1}{T})^{t-1}b_{\tilde{c}^{t,k}} \right),
	\end{align}
	where we used the fact that $(1+\frac{1}{T})^T \leq e$. 

    \textbf{Term 3: Analysis of $b_{\tilde{c}^{t,k}}$.} 
    In the following, we analyze the bonus term $b_{\tilde{c}^{t,k}}$.Recall that for any state $s^t$, when the number of visits $\tilde{C}^{t,k}$ to $(t,s^t)$ reaches the predefined stage length (i.e., $\tilde{C}^t(s^t) = \tilde{L}^t(s^t)$), the algorithm enters a new stage. The adaptive stage length is updated as $\tilde{L}^t(s^t)\gets \lfloor{\lambda(s^t) (1+\frac{1}{T})\tilde{L}^t(s^t)}\rfloor$, where $\lambda(s^t) \in (\frac{T}{T+1},1]$; the bonus term is defined as $b_{\tilde{c}}= 4\sqrt{T^2 A_i\iota/\tilde{c}}$. 
    
    For convenience, we introduce auxiliary notations for adaptive stage lengths: let $e_{s^t,1} = T$ and $e_{s^t,j+1} = \lfloor{\lambda(s^t)(1+\frac{1}{T})e_{s^t,j}}\rfloor$ for $j\geq 1$. For any $t\in \mathcal{T}$, we decompose the sum over $k$ by state $s^t$ and stage length $e_{s^t,j}$: 
	\begin{align}
	\sum_{k=1}^K (1+\frac{1}{T}&)^{t-1} b_{\tilde{c}^{t,k}} \leq \sum_{k=1}^K (1+\frac{1}{T})^{t-1}4\sqrt{T^2 A_i\iota/\tilde{C}^{t,k}}\nonumber\\
	=&4\sqrt{T^2 A_i\iota}\sum_{s^t \in\mathcal{S}^t}\sum_{j\geq 1} (1+\frac{1}{T})^{t-1} \times \nonumber\\ & e_{s^t,j}^{-\frac{1}{2}}\sum_{k=1}^K\mathbb{I}\left[s^{t,k} = s^t, \tilde{C}^{t,k}(s^{t,k}) = e_{s^t,j}\right]\nonumber\\
	=& 4\sqrt{T^2 A_i\iota}\sum_{s^t \in\mathcal{S}^t}\sum_{j\geq 1} (1+\frac{1}{T})^{t-1}w(s^t,j)e_{s^t,j}^{-\frac{1}{2}},\nonumber
	\end{align}
	where we define
\[
w(s^t,j):= \sum_{k=1}^K\mathbb{I}\{s^{t,k} = s^t, \tilde{C}^{t,k}(s^{t,k}) = e_j\},
\]
for any $s^t \in\mathcal{S}^t$ (number of episodes where $(t,s^t)$ has pre-stage length $e_{s^t,j}$). If we further let $w(s^t) := \sum_{j\geq 1} w(s^t,j)$, we can see that $\sum_{s^t \in\mathcal{S}^t} w(s^t) = K$. 
    For each fixed state $s^t$, we now seek an upper bound of its corresponding $j$ value, denoted as $J$ in what follows.
   By the stage update rule, for any $1\leq j\leq J$:
\begin{align*}
&\lfloor{\lambda_{\min}(1+\frac{1}{T})e_{s^t,j}}\rfloor \\ &\leq w(s^t,j)   =\sum_{k=1}^K \mathbb{I}\left[s^{t,k}=s^t, \tilde{C }^{t,k}(s^{t,k}) = e_{s^t,j}\right]  \\ &\leq \lfloor{(1+\frac{1}{T})e_{s^t,j}}\rfloor,
\end{align*}
where $\lambda_{\min}\in(\frac{T}{T+1},1]$. Thus, the sequence $\{e_{s^t,j}\}$ grows almost geometrically with ratio $\rho \in(1,1+\frac{1}{T}]$, which can also be written as $\rho = 1+\frac{c}{T}$ for some constant $c \in (0,1]$. By the formula for the sum of a geometric sequence, it follows that
\begin{align*}
\sum_{j=1}^J e_{s^t,j} = \Theta\left( \frac{T}{c} e_{s^t,1}[\rho^J-1] \right) = \Theta\left( T e_{s^t,J} \right),\\
\sum_{j=1}^J e_{s^t,j}^{\frac{1}{2}} = \Theta\left( \frac{e_{s^t,1}^{\frac{1}{2}}[\rho^{\frac{J}{2}}-1]}{\sqrt{1+\frac{1}{T}}-1} \right) = \Theta\left( T e_{s^t,J}^{\frac{1}{2}} \right).
\end{align*}
Therefore, we have 
	\[
	\begin{aligned} 
	\sum_{j\geq 1} (1+\frac{1}{T})^{t-1}w(s^t,j)e_{s^t,j}^{-\frac{1}{2}}\leq O\left(\sum_{j=1}^J e_{s^t,j}^{\frac{1}{2}}\right)\leq O\left(\sqrt{w(s^t)T}\right), 
	\end{aligned}
	\]
	 Finally, using the fact that $\sum_{s^t \in\mathcal{S}^t}w(s^t) = K$ and applying the Cauchy-Schwartz inequality, we have
	\begin{align}
	&\sum_{t=1}^T\sum_{k=1}^K (1+\frac{1}{T})^{t-1} b_{\tilde{c}^{t,k}}  \nonumber\\ & = O\left(\sqrt{T^4 A_i\iota}\sum_{s^t \in \mathcal{S}^t}\sum_{j\geq 1} (1+\frac{1}{T})^{t-1}w(s^t,j)e_j^{-\frac{1}{2}}\right)\nonumber\\
	&\leq O\left(\sqrt{T^4 A_i\iota}\sum_{s^t \in\mathcal{S}^t}\sqrt{w(s^t)T}  \right) \leq O\left(\sqrt{SA_i KT^5\iota}\right).\label{eqn:tmp8}
	\end{align}
	Summarizing the results above leads to
	\[
	\sum_{k=1}^K \delta_1^k \leq O\left( ST^3 + \sqrt{SA_i KT^5\iota} \right). 
	\]
	In the case when $K$ is large enough, such that $K \geq \frac{ST}{A_i\iota}$, the second term becomes dominant, and we obtain the desired result:	
	\[
	V_{1,i}^{\star,\bar{\bm\pi}_{-i}}(s^1) - V_{1,i}^{\bar{\bm\pi}}(s^1) \leq \frac{1}{K}\sum_{k=1}^{K}\delta_1^k \leq O\left(\sqrt{SA_i T^5\iota/K} \right) .
	\]
	This completes the proof of the theorem.
\hfill \ensuremath{\blacksquare}

An immediate corollary is that we obtain an $\epsilon$-approximate CCE when $\sqrt{S\amax T^5\iota/K}\leq \epsilon$, which is Theorem~\ref{thm:cce} in the main text.

\vspace{.8em}
\noindent\textbf{Theorem \ref{thm:cce}.} (Sample complexity of learning CCE). For any $p\in (0,1]$, set $\iota = \log(2NSA_{\max} KT/p)$, and let the agents run Algorithm~\ref{alg:sbv} for $K$ episodes with $K= O(S A_{\max}T^5 \iota/\epsilon^2)$. Then, with probability at least $1-p$, the output policy $\bar{\bm\pi}$ constitutes an $\epsilon$-approximate coarse correlated equilibrium. 

\end{document}